\documentclass[aps,prx,twocolumn,superscriptaddress,nofootinbib,floatfix]{revtex4-1}

\usepackage{xcolor}
\definecolor{darkblue}{RGB}{40, 85, 120} 
\definecolor{quantumviolet}{HTML}{53257F} 
\definecolor{quantumgray}{HTML}{555555} 
\definecolor{quantumgreen}{HTML}{007474} 
\definecolor{quantumblue}{HTML}{3A5FCD} 
\definecolor{quantumpurple}{HTML}{8A2BE2} 
\definecolor{warmorange}{HTML}{FF8C42} 
\definecolor{violetpurple}{HTML}{8A2BE2} 
\definecolor{teal}{HTML}{008080} 
\definecolor{softbeige}{HTML}{ECE5D7} 
\definecolor{charcoalgray}{HTML}{2E2E2E} 
\definecolor{coralred}{HTML}{FF6F61} 
\definecolor{brightyellow}{HTML}{FFD700} 
\definecolor{cyanblue}{HTML}{1C77C3} 
\definecolor{deepbluegray}{HTML}{2C3E99} 
\definecolor{softblue}{HTML}{AFCBFF} 

\usepackage{amsmath}
\usepackage{amssymb}
\usepackage{amsthm}
\usepackage{bm}
\usepackage{braket}
\usepackage{xcolor}
\usepackage{graphicx}
\usepackage[caption=false]{subfig}
\usepackage[colorlinks, linktocpage, hypertexnames = false]{hyperref}

\usepackage{dsfont}

\hypersetup{linkcolor = {gray!80!black},
citecolor = {blue!90!black}, 
urlcolor = {gray!60!black}}

\usepackage{tikz}
\usetikzlibrary{positioning,intersections}
\usetikzlibrary{calc}
\usetikzlibrary{shapes.geometric}
\usepackage{braids}
\usepackage{tikz-cd}
\usetikzlibrary{shapes.geometric,shapes.misc}
\usetikzlibrary{arrows,matrix,calc,scopes,decorations.markings,snakes}
\usetikzlibrary{arrows.meta}
\usetikzlibrary{intersections, patterns,fit} 
\usetikzlibrary{decorations.pathreplacing,angles,quotes}

\usepackage{qcircuit}

\theoremstyle{theorem}
\newtheorem{theorem}{Theorem}
\newtheorem{lemma}{Lemma}
\newtheorem{definition}{Definition}
\newtheorem{corollary}{Corollary}

\theoremstyle{remark}

\newcommand{\Herm}{\mathsf{Herm}}

\newcommand\Tr{\operatorname{Tr}}

\usepackage{euscript}

\usepackage{microtype}  

\usepackage[bbgreekl]{mathbbol} 
\DeclareMathAlphabet{\mathcal}{OMS}{cmsy}{m}{n}

\usepackage[noframe]{showframe}
\usepackage{framed}
\usepackage{lipsum}
 {\endMakeFramed}
\definecolor{shadecolor}{gray}{0.9} 

\makeatletter
\newsavebox{\@brx}
\newcommand{\llangle}[1][]{\savebox{\@brx}{\(\m@th{#1\langle}\)}%
  \mathopen{\copy\@brx\kern-0.5\wd\@brx\usebox{\@brx}}}
\newcommand{\rrangle}[1][]{\savebox{\@brx}{\(\m@th{#1\rangle}\)}%
  \mathclose{\copy\@brx\kern-0.5\wd\@brx\usebox{\@brx}}}
\makeatother

\usepackage{multirow}

\usepackage{orcidlink} 
\usepackage{adjustbox}

\usepackage{graphicx}
\usepackage{enumerate}
\usepackage{epstopdf}
\usepackage{amsmath,amsthm,amssymb,epsfig,amsfonts,mathrsfs}

\hypersetup{colorlinks=true}

\def\Tr{\mathop{\mathrm{Tr}}\nolimits}

\begin{document}

\flushbottom

\title{Temporal State Tomography via Quantum Snapshotting the Temporal Quasiprobabilities}

\author{Zhian Jia\orcidlink{0000-0001-8588-173X}}
\email{giannjia@foxmail.com}

\affiliation{Institute of Quantum Physics,  School of Physics, Central South University,
Changsha 418003, China}

\date{\today}

\begin{abstract}
Quantum tomography is a cornerstone of quantum information science, enabling the reconstruction of states and channels from experimental data. Here we introduce a new paradigm, temporal state tomography (TST), for reconstructing quantum processes across multiple times. Our approach is based on temporal quasiprobability distributions (TQDs), which, in the informationally complete setting, provide a complete description of multi-time quantum processes and uniquely determine temporal states. We formulate TST as a unified framework for reconstructing both density operators and quantum channels within a single scheme. We show that any TQD can be obtained via classical post-processing of measurement outcomes generated by a fixed set of quantum instruments, thereby establishing a direct operational route to accessing TQDs experimentally. For informationally complete TQDs, the associated temporal state can be reconstructed via a temporal Bloch-type representation. Leveraging this correspondence, we derive the sample complexity of TST, thereby quantifying its statistical efficiency.
\end{abstract}

\maketitle

\emph{Introduction.} ---
A fundamental problem in quantum information science is the efficient extraction of information from unknown quantum objects—such as quantum states, quantum channels, quantum multi-time processes, and higher-order quantum operations—via a procedure known as quantum tomography, see, e.g.,\cite{Anshu2024,Mohseni2008processtom,Li2020hamiltoniantom,White2022processtom,Antesberger2024highertomography,Li2025combtomography}.
In standard formalism of quantum theory, the quantum state and quantum channel are characterized via density operator and completely positive trace-preserving (CPTP) maps respectively, which makes their tomography procedure works differently.

Temporal states provide a unified formalism encompassing quantum state tomography, quantum process tomography, and their extension to multi-time quantum processes. Within this framework, a tensor-product structure is introduced across distinct temporal instances, placing spatial and temporal degrees of freedom on an equal footing, in analogy with relativity.
The concept of temporal states dates back to 1984, with the consistent histories formulation of quantum theory~\cite{griffiths1984consistent} serving as a prototypical example. The central idea is to endow time with a tensor-product structure, $\mathcal{H}_{t_n}\otimes\cdots \otimes \mathcal{H}_{t_0}$. More generally, one may consider a spatiotemporal Hilbert space $\bigotimes_{(x,t)}\mathcal{H}_{(x,t)}$, within which spatial and temporal degrees of freedom are treated on an equal footing.

In recent years, a variety of temporal-state formalisms have been developed, see, e.g.,~\cite{fitzsimons2015quantum,fullwood2022quantum,Parzygnat2023pdo,Song2024causal,Lie2025stateovertime,gutoski2007toward,Chiribella2009comb,Pollock2018processtensor,oreshkov2012quantum,Aharonov2009multi,cotler2018superdensity,jia2024spatiotemporal}. The relationships and distinctions among these approaches have been analyzed~\cite{liu2023unification,Parzygnat2023pdo,Jia2025TemporalKirkwoodDirac}.
In this work, by a temporal state we mean an object that encodes both the states at different time instances and the dynamical evolution between them. We further require temporal states to satisfy a quantum Kolmogorov consistency condition: for any temporal subset $\{ t_{i_1}, \dots, t_{i_k}\} \subseteq \{t_0, t_1, \dots, t_n\}$, the corresponding temporal state $\Upsilon_{t_{i_k}\cdots t_{i_1}}$ is obtained as a reduced state of $\Upsilon_{t_n \cdots t_0}$~\cite{Jia2025TemporalKirkwoodDirac}.

Consider a multi-time quantum process $\mathfrak{P} = \bigl(\rho_{t_0}, \mathcal{E}_{t_1 \leftarrow t_0}, \dots, \mathcal{E}_{t_n \leftarrow t_{n-1}}\bigr)$, where $\rho_{t_0}$ is the initial state and each $\mathcal{E}_{t_j \leftarrow t_{j-1}}$ is a completely positive trace-preserving (CPTP) map describing the evolution from $t_{j-1}$ to $t_j$. Temporal states (in particular those arising from temporal Bloch tomography) can often be written as~\cite{fullwood2022quantum,Parzygnat2023pdo,Lie2025stateovertime,lie2025probingquantumstatesspacetime,fullwood2023quantum,Liu2025PDO,Jia2025TemporalKirkwoodDirac}
\begin{equation}\label{eq:TemporalStateExp}
\Upsilon_{t_n \cdots t_0} = \mathcal{E}_{t_n \leftarrow t_{n-1}} \star_{\mathrm{TS}} \bigl(\cdots \star_{\mathrm{TS}} (\mathcal{E}_{t_1 \leftarrow t_0} \star_{\mathrm{TS}} \rho_{t_0})\bigr),
\end{equation}
where $\star_{\mathrm{TS}}$ denotes the temporal link product, whose precise form depends on the chosen formalism.

Quantum measurements are inherently probabilistic and, in general, provide only partial information about the underlying temporal quantum state. Consequently, obtaining a complete classical description of an unknown temporal state requires combining data from multiple suitably chosen measurements, we call this task the \emph{temporal state tomography} (TST).

\begin{figure}
    \centering
    \includegraphics[width=0.9\linewidth]{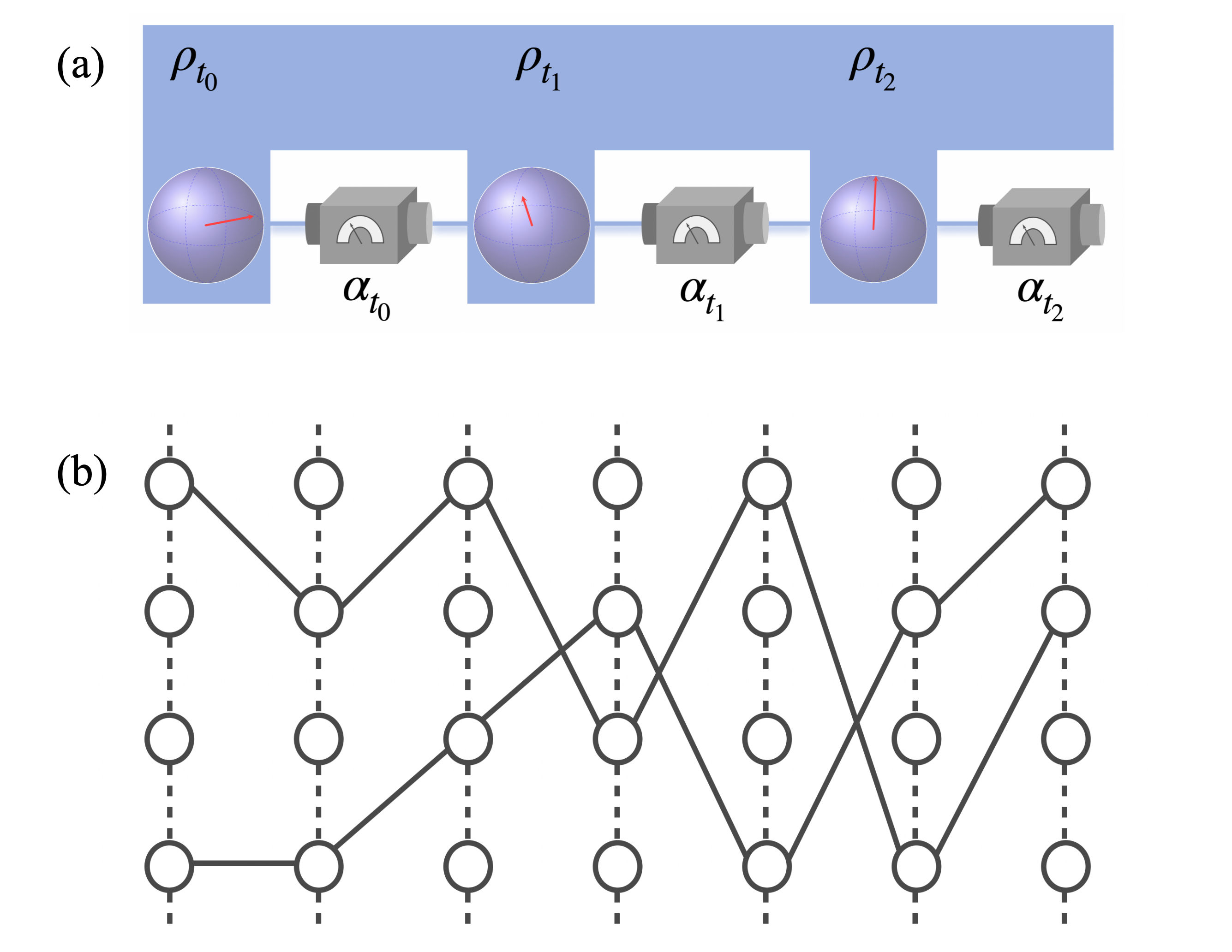}
\caption{(a) Temporal quantum states generalize the multipartite density operator formalism to the time domain, enabling a unified description of quantum systems with both timelike and spacelike correlations.(b) By selecting an appropriate phase space at each time step, one can construct a temporal phase space; each quantum trajectory within this temporal phase space is associated with a corresponding temporal quasiprobability distribution that encodes temporal quantum correlations.}
    \label{fig:TemporalState}
\end{figure}

\begin{definition}[Temporal State Tomography (TST)]\label{def:TST}
Let $\mathsf{TS}(\mathcal{H}_{t_n}\otimes\cdots\otimes \mathcal{H}_{t_0})$ denote the set of all temporal states for an $(n+1)$-step quantum temporal process.  
Consider $\delta,\varepsilon \in (0,1)$ and $N \in \mathbb{N}$, and let $\Upsilon \in \mathbf{TS}(\mathcal{H}_{t_n}\otimes\cdots\otimes \mathcal{H}_{t_0})$ be an unknown temporal state. Given $N$ copies of $\Upsilon$, the goal is to construct a classical description of a temporal state $\Upsilon'$ such that
\begin{equation}
    \mathbb{P}\big[d(\Upsilon,\Upsilon') \leq \varepsilon \big] \geq 1 - \delta.
\end{equation}
In other words, with probability at least $1 - \delta$, the distance between $\Upsilon$ and $\Upsilon'$ does not exceed $\varepsilon$. Here, $\varepsilon$ is referred to as the \emph{error} with respect to the distance function $d$, and $\delta$ is the \emph{failure probability}.
For the single-step case, this definition reduces to the standard quantum state tomography.
\end{definition}

However, unlike quantum state and process tomography, where positive semidefiniteness significantly simplifies reconstruction, temporal states are generally not positive semidefinite, and their state space is correspondingly more intricate. To address this, we employ the temporal quasiprobability distribution (TQD), recently introduced in~\cite{Jia2025TemporalKirkwoodDirac}. Some of its nonclassical properties are further discussed in~\cite{ding2026quantifying}; see also Refs.~\cite{fullwood2025spatiotemporalbornrule,Lie2025stateovertime,lie2025probingquantumstatesspacetime} for discussions of special cases.
Quasiprobabilistic formulation of quantum mechanics, originating from Wigner's seminal work~\cite{Wigner1932on}, is a powerful framework for capturing spatial nonclassicality; the TQD serves as its temporal counterpart. Building on a temporal generalization of the Kirkwood-Dirac quasiprobability distribution~\cite{Kirkwood1933quantum,Dirac1945on,margenau1961correlation,ArvidssonShukur2024KDreview}, it has been shown that the temporal state can be fully reconstructed from the TQD~\cite{Jia2025TemporalKirkwoodDirac,fullwood2025spatiotemporalbornrule,Lie2025stateovertime,lie2025probingquantumstatesspacetime}.

Analogous to the spatial case, the TQD is defined over a temporal phase space (see Fig.~\ref{fig:TemporalState}), assigning a quasiprobability to each quantum trajectory in this space. It captures temporal quantum correlations and dynamical information of the quantum multi-time process. For a fixed time instance $t_k$, the corresponding marginal TQD yields a quasiprobabilistic representation of $\rho_{t_k}$. In this sense, conventional spatial quasiprobability distributions arise naturally as marginals of the TQD. Moreover, the TQD satisfies a Kolmogorov consistency condition~\cite{Jia2025TemporalKirkwoodDirac}.
Since the TQD is generally complex-valued or can take negative values, one must develop physical schemes to access it experimentally. A standard approach is based on interferometric measurement schemes~\cite{Jia2025TemporalKirkwoodDirac,Lostaglio2023kirkwooddirac,Gherardini2024KD}; however, these are not well suited for TST. 
In this work, inspired by Ref.~\cite{Wang2024KD}, we introduce a quantum snapshotting scheme that transforms the TQD into a physically realizable sequential measurement implemented via a quantum instrument. The resulting temporal probability distribution is then classically postprocessed to reconstruct the desired TQD, which is subsequently used to perform TST task.

\vspace{1em}
\emph{Quantum multi-time processes and temporal quasiprobability distributions.} ---
A quantum multi-time process is specified by
\(
\mathfrak{P} = \bigl(\rho_{t_0}, \mathcal{E}_{t_1 \leftarrow t_0}, \cdots, \mathcal{E}_{t_n \leftarrow t_{n-1}}\bigr),
\)
where $\rho_{t_0}$ is the initial state and $\{\mathcal{E}_{t_k \leftarrow t_{k-1}}\}_{k=1}^n$ are time-ordered CPTP maps describing the evolution between successive times $t_{k-1}$ and $t_k$.
In this work, we primarily focus on the Markovian setting. For non-Markovian dynamics (including an explicit environment and the Stinespring dilation of the evolution map), the TQD formalism remains applicable; see Refs.~\cite{Jia2025TemporalKirkwoodDirac,Kelvin2025KD}.

For a quantum multi-time process, we define the TQD as
\begin{equation}
Q(\beta_n, \dots, \beta_1) 
=
\operatorname{Tr} \Big[ \mathcal{P}^{t_n}_{\beta_n} \circ \mathcal{E}_{t_n \leftarrow t_{n-1}} \circ \cdots \circ \mathcal{P}^{t_1}_{\beta_1} \circ \mathcal{E}_{t_1 \leftarrow t_0} \circ \mathcal{P}^{t_0}_{\beta_0} (\rho_{t_0}) \Big],
\label{eq:temporal_quasiprobability}
\end{equation}
where $\beta_k$ denotes a phase-space point at time $t_k$, and $\mathcal{P}^{t_k}_{\beta_k}$ is the corresponding phase-space operation, which is generally not completely positive. Typical choices~\cite{Jia2025TemporalKirkwoodDirac,Wang2024KD} include the right projection $\mathcal{P}^{t_k}_{b_k}(\bullet) = \bullet\, \Pi_{b_k}$, the left projection $\mathcal{P}^{t_k}_{a_k}(\bullet) = \Pi_{a_k}\, \bullet$, and the doubled projection $\mathcal{P}^{t_k}_{a_k,b_k}(\bullet) = \Pi_{a_k}\, \bullet\, \Pi_{b_k}$, where $\{\ket{a_k}\}$ and $\{\ket{b_k}\}$ are orthonormal bases of the Hilbert space. To render the TQD informationally complete, these projectors can be generalized to an informationally complete POVM (IC-POVM) $\{K_{\beta}\}$. We refer to the resulting constructions as the left, right, and doubled IC-TQD, denoted by $\overleftarrow{Q}$, $\overrightarrow{Q}$, and $\overleftrightarrow{Q}$, respectively.
These TQDs are generally complex-valued and are therefore referred to as temporal Kirkwood–Dirac distributions \cite{Kirkwood1933quantum,Dirac1945on,Jia2025TemporalKirkwoodDirac,ding2026quantifying,ding2026quantifying}. Their real part defines a quasiprobability distribution, known as the temporal Margenau–Hill  distribution \cite{margenau1961correlation,Jia2025TemporalKirkwoodDirac,fullwood2025spatiotemporalbornrule,Lie2025stateovertime,lie2025probingquantumstatesspacetime}.

These IC-TQDs can be regarded as quasiprobabilistic temporal states, equivalent to the temporal state in operator form~\cite{Jia2025TemporalKirkwoodDirac}. To see this, consider the right Kirkwood–Dirac TQD:
\begin{equation}
\begin{aligned}
       & \overrightarrow{Q}_{\rm KD}(\beta_n, \ldots, \beta_0) \\
        = &\ \operatorname{Tr}\Big[ 
    \mathcal{E}_{t_n \leftarrow t_{n-1}} \Big( 
    \cdots 
    \mathcal{E}_{t_2 \leftarrow t_1} \big(
    \mathcal{E}_{t_1 \leftarrow t_0} ( 
    \rho_{t_0} K_{\beta_0} ) 
    K_{\beta_1} \big)  K_{\beta_2}  \cdots 
    \Big) 
    K_{\beta_n}
    \Big],
\end{aligned}
\label{eq:right_KD_TQD}
\end{equation}
where the Pauli measurements at the $k$-th step can be expanded as linear combinations of $K_{\beta_k}$. From this expansion, one obtains the temporal Pauli correlators $\overrightarrow{T}^{\mu_n,\ldots,\mu_0}=\langle \{\sigma_{\mu_n}\,\cdots,\sigma_{\mu_0}\}\rangle$ for all $\mu_n,\cdots,\mu_0=0,\cdots,3$, which determine the temporal state via the temporal Bloch formula:
\begin{equation}
    \overrightarrow{\Upsilon} = \frac{1}{2^{n+1}} \sum \overrightarrow{T}^{\mu_n \cdots \mu_0} \sigma_{\mu_n}\otimes \cdots \otimes \sigma_{\mu_0}.
\label{eq:temporal_state}
\end{equation}
Conversely, the TQD can be obtained from the temporal state operator via a generalized Born rule,
\begin{equation}
    \overrightarrow{Q}_{\rm KD}(\beta_n, \ldots, \beta_0)
    =
    \operatorname{Tr}\!\left( \overset{\rightarrow}{\Upsilon}\, K_{\vec{\beta}} \right),
\label{eq:TQD_from_temporal_state}
\end{equation}
where we abbreviate $K_{\beta_n}\otimes \cdots \otimes K_{\beta_0}$ as $K_{\vec{\beta}}$ (for doubled case, the corresponding temporal state becomes a doubled density operator; the temporal Born rule is more subtle \cite{jia2024spatiotemporal} yet still retains an inner-product structure.). Introducing the dual frame $G_{\vec{\beta}} := G_{\beta_n}\otimes \cdots \otimes G_{\beta_0}$, it is easy to prove that temporal state is given by
\begin{equation}\label{eq:TSdual}
    \overrightarrow{\Upsilon} = \sum_{\vec{\beta}} G_{\vec{\beta}}\, \overrightarrow{Q}_{\rm KD}(\beta_n, \ldots, \beta_0).
\end{equation}
Analogous constructions yield the left, right, and doubled Kirkwood–Dirac and Margenau–Hill temporal states. Their interrelations, as well as connections to pseudo-density operators, are discussed in detail in Ref.~\cite{Jia2025TemporalKirkwoodDirac}.
The temporal state also admits a recursive representation~\cite{Lie2025stateovertime,Jia2025TemporalKirkwoodDirac,Parzygnat2023pdo,fullwood2023quantum,Liu2025PDO} as given in Eq.~\eqref{eq:TemporalStateExp}. In this case, the temporal link product is defined for the operator product of Choi matrices and the initial state within their overlapping Hilbert space \cite{Jia2025TemporalKirkwoodDirac}.
Different expressions have their own advantages, Eq.~\eqref{eq:TSdual} plays a crucial role in analyzing the sample complexity of TST task. The recursive expression~\eqref{eq:TemporalStateExp} facilitates solving the quantum evolution map once the temporal state is reconstructed from experimental data.

\vspace{1em}
\emph{Quantum snapshotting of temporal quasiprobability distributions.} --- We now introduce a postprocessing protocol for reconstructing arbitrary TQDs from experimental data using quantum instruments.

Recall that a quantum instrument $\mathcal{I} = \{\mathcal{I}_{\alpha}\}$ is a collection of completely positive, trace-nonincreasing (CPTNI) maps such that $\sum_{\alpha} \mathcal{I}_{\alpha}$ is CPTP. For a density operator $\rho$, the outcome $\alpha$ occurs with probability $p(\alpha) = \Tr[\mathcal{I}_{\alpha}(\rho)]$, and the corresponding post-measurement state is $\rho' = \mathcal{I}_{\alpha}(\rho)/p(\alpha)$. Each map admits a Kraus representation,
\begin{equation}\label{eq:intrument}
  \mathcal{I}_{\alpha}(\rho) = \sum_{l} M_{\alpha,l}\,\rho\, M_{\alpha,l}^{\dagger},  
\end{equation}
with associated POVM elements $F_{\alpha,l} = M_{\alpha,l}^{\dagger} M_{\alpha,l}$ satisfying $\sum_{\alpha,l} F_{\alpha,l} = \mathbb{I}$.

The phase-space operation $\mathcal{P}^{t_0}_{q_0}$ is not, in general, a valid quantum operation, as it need not be completely positive or even Hermiticity preserving. To overcome this limitation, we decompose such operations into linear combinations of CPTNI maps, allowing the TQD to be reconstructed via classical postprocessing of measurement outcomes.

\begin{theorem}\label{thm:snapshotting}
There exists a quantum instrument $\{\mathcal{I}_{\alpha}\}$ that forms a basis for the superoperator space: each $\mathcal{I}_{\alpha}$ is a CPTNI map, $\sum_{\alpha} \mathcal{I}_{\alpha}$ is CPTP, and any linear superoperator $\mathcal{F}: \mathbf{B}(\mathcal{H}_{\tt i}) \to \mathbf{B}(\mathcal{H}_{\tt o})$ admits a decomposition
\begin{equation}
    \mathcal{F}(\rho) = \sum_{\alpha} \chi_{\alpha}\,\mathcal{I}_{\alpha}(\rho),
\end{equation}
where $\chi_{\alpha} \in \mathbb{C}$.
\end{theorem}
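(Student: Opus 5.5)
The plan is to work in the Choi picture, where linear superoperators become operators and complete positivity becomes positive semidefiniteness. Let $d_{\tt i}=\dim\mathcal{H}_{\tt i}$ and $d_{\tt o}=\dim\mathcal{H}_{\tt o}$. The space of linear maps $\mathbf{B}(\mathcal{H}_{\tt i})\to\mathbf{B}(\mathcal{H}_{\tt o})$ is then isomorphic, via the Choi–Jamio{\l}kowski map $\mathcal{F}\mapsto J(\mathcal{F})=\sum_{ij}\mathcal{F}(\ket{i}\bra{j})\otimes\ket{i}\bra{j}$, to $\mathbf{B}(\mathcal{H}_{\tt o}\otimes\mathcal{H}_{\tt i})$, a complex vector space of dimension $D=d_{\tt o}^2d_{\tt i}^2$. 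Under this isomorphism:
\begin{itemize}
\item $\mathcal{I}_\alpha$ is CP if and only if $J(\mathcal{I}_\alpha)\geq 0$;
\item $\sum_\alpha\mathcal{I}_\alpha$ is TP if and only if $\Tr_{\tt o}\sum_\alpha J(\mathcal{I}_\alpha)=\mathbb{I}_{\tt i}$.
\end{itemize}
It therefore suffices to find positive operators $C_\alpha$ on $\mathcal{H}_{\tt o}\otimes\mathcal{H}_{\tt i}$ that span $\mathbf{B}(\mathcal{H}_{\tt o}\otimes\mathcal{H}_{\tt i})$ and satisfy $\Tr_{\tt o}\sum_\alpha C_\alpha=\mathbb{I}_{\tt i}$. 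Each resulting $\mathcal{I}_\alpha$ is automatically trace non-increasing, because $\Tr_{\tt o}C_\alpha\le\Tr_{\tt o}\sum_\beta C_\beta=\mathbb{I}_{\tt i}$. The coefficients $\chi_\alpha$ then come from expanding $J(\mathcal{F})=\sum_\alpha\chi_\alpha C_\alpha$ and applying $J^{-1}$, which is linear. Complex coefficients are needed in general, since $\mathcal{F}$ need not be Hermiticity preserving.

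For an explicit construction, take an informationally complete POVM $\{E_a\}_{a=1}^{m}$ on $\mathcal{H}_{\tt i}$ with $m\ge d_{\tt i}^2$; its elements span $\mathbf{B}(\mathcal{H}_{\tt i})$. Also take any set of density operators $\{\sigma_b\}_{b=1}^{d_{\tt o}^2}$ spanning $\mathbf{B}(\mathcal{H}_{\tt o})$, for instance the standard set $\ket{k}\bra{k}$ and $\ket{\pm_{kl}}\bra{\pm_{kl}}$, $\ket{\pm i_{kl}}\bra{\pm i_{kl}}$. Define the measure-and-prepare instrument
\begin{equation}
\mathcal{I}_{(a,b)}(\rho)=\frac{1}{d_{\tt o}^2}\Tr(E_a\rho)\,\sigma_b ,
\end{equation}
with outcome label $\alpha=(a,b)$. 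Each $\mathcal{I}_{(a,b)}$ is CP, with Kraus operators $\sqrt{\sigma_b}^{\,1/2}$-type products $\tfrac{1}{d_{\tt o}}\sqrt{\lambda}\,\ket{\psi}\bra{\phi}\sqrt{E_a}$ built from spectral decompositions. The sum telescopes to $\sum_{a,b}\mathcal{I}_{(a,b)}(\rho)=\Tr(\rho)\,\frac{1}{d_{\tt o}^2}\sum_b\sigma_b$, which is trace preserving because each $\sigma_b$ has unit trace and there are $d_{\tt o}^2$ of them. The Choi operators are $C_{(a,b)}\propto\sigma_b\otimes E_a^{T}$. Since $\{E_a^T\}$ spans $\mathbf{B}(\mathcal{H}_{\tt i})$ and $\{\sigma_b\}$ spans $\mathbf{B}(\mathcal{H}_{\tt o})$, these product operators span the tensor product space, so every Choi matrix, and hence every $\mathcal{F}$, lies in their span. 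If one wants a basis in the strict sense, choose $m=d_{\tt i}^2$ (a minimal IC-POVM, e.g.\ a SIC or any linearly independent IC set), so that the products are linearly independent and exactly $D$ in number. The coefficients are then unique and given explicitly by dual frames: $\chi_{(a,b)}=d_{\tt o}^2\Tr\!\big[(\tilde\sigma_b\otimes \tilde E_a^{T})^\dagger J(\mathcal{F})\big]$.

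The only step needing care is normalization. Trace preservation must hold while each outcome stays CP and the spanning property is kept. The uniform weighting $1/d_{\tt o}^2$ over preparations handles this. Checking that $\{E_a\}$ is a genuine POVM summing to $\mathbb{I}_{\tt i}$ while also spanning is standard, since IC-POVMs exist in every dimension. The rest is linear algebra.
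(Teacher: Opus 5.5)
Your proof is correct. It shares the paper's reduction but uses a different construction. Both arguments work in the Choi picture and reduce the theorem to finding positive operators that span $\mathbf{B}(\mathcal{H}_{\mathrm{i}}\otimes\mathcal{H}_{\mathrm{o}})$ and whose sum has output-partial-trace $\mathbb{I}_{\mathrm{i}}$.

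\textbf{How the constructions differ.} The paper takes these positive operators to be the elements $F_\alpha$ of an arbitrary IC-POVM on the joint space, so its Choi operators may be entangled. You use product Choi operators $\sigma_b\otimes E_a^{T}$, i.e.\ a measure-and-prepare (entanglement-breaking) instrument. Your choice buys three things:
\begin{itemize}
\item The instrument is operationally simple: measure $\{E_a\}$, then prepare a uniformly random $\sigma_b$.
\item The dual basis factorizes, so the $\chi_{(a,b)}$ come from local dual frames. The paper instead needs a Gram matrix on the $d_{\mathrm{i}}^2d_{\mathrm{o}}^2$-dimensional joint space.
\item The normalization is handled correctly. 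In the paper's version, $\sum_\alpha F_\alpha=\mathbb{I}_{\mathrm{i}}\otimes\mathbb{I}_{\mathrm{o}}$ gives $\operatorname{Tr}_{\mathrm{o}}\sum_\alpha F_\alpha=d_{\mathrm{o}}\mathbb{I}_{\mathrm{i}}$, not $\mathbb{I}_{\mathrm{i}}$. The $F_\alpha$ must therefore be rescaled by $1/d_{\mathrm{o}}$ before they define a trace-preserving instrument. Your weight $1/d_{\mathrm{o}}^2$ over $d_{\mathrm{o}}^2$ unit-trace preparations plays exactly that role.
\end{itemize}
The paper's version is more general, since any joint IC-POVM is allowed. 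For example, rank-one elements give single-Kraus maps, which the paper remarks on.

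\textbf{A slip in your example.} The set $|k\rangle\langle k|$, $|{\pm}_{kl}\rangle\langle{\pm}_{kl}|$, $|{\pm} i_{kl}\rangle\langle{\pm} i_{kl}|$ has $2d_{\mathrm{o}}^2-d_{\mathrm{o}}$ elements, not $d_{\mathrm{o}}^2$. With weight $1/d_{\mathrm{o}}^2$ the total map would then not be trace preserving. To fix it, keep only $|{+}_{kl}\rangle$ and $|{+}i_{kl}\rangle$ for $k<l$, which gives exactly $d_{\mathrm{o}}^2$ linearly independent states. Alternatively, weight by the actual number of preparations. The garbled Kraus-operator phrase is harmless, because complete positivity already follows from $\sigma_b\otimes E_a^{T}\ge 0$.
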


\begin{proof}
    For $\mathcal{H}_{\mathrm{i}} \otimes \mathcal{H}_{\mathrm{o}}$, choose an IC-POVM $\{F_{\alpha}\}$. Define
\begin{equation}
    \mathcal{I}_{\alpha}(\rho) = 
    \operatorname{Tr}_{\mathrm{i}}
    \!\left[
    (\rho^{T} \otimes \mathbb{I}_{\mathrm{o}})
    F_{\alpha}
    \right],
\end{equation}
which is completely positive since $F_{\alpha}$ is positive.
It is clear that $\sum_{\alpha} \mathcal{I}_{\alpha}$ has Choi matrix
\begin{equation}
    \Phi_{\mathcal{I}} = \sum_{\alpha} F_{\alpha} = \mathbb{I}_{\mathrm{i}} \otimes \mathbb{I}_{\mathrm{o}}.
\end{equation}
We have $\operatorname{Tr}_{\mathrm{o}} \Phi_{\mathcal{I}} = \mathbb{I}_{\mathrm{i}}$, hence $\{\mathcal{I}_{\alpha}\}$ forms a quantum instrument.
Since $\{F_{\alpha}\}$ span $\mathbf{B}(\mathcal{H}_{\mathrm{i}} \otimes \mathcal{H}_{\mathrm{o}})$, $\{\mathcal{I}_{\alpha}\}$ span the space of linear superoperator from $\mathbf{B}(\mathcal{H}_{\mathrm{i}} )$ to $\mathbf{B}( \mathcal{H}_{\mathrm{o}})$. See Supplementary Material for more details.
\end{proof}

Theorem~1 of Ref.~\cite{Wang2024KD} can be viewed as a special case of the above result. Note that Eq.~\eqref{eq:intrument} further implies that any $\mathcal{F}$ admits a decomposition as a linear combination of Kraus-type CPTNI maps $\mathcal{K}_{\alpha}(\bullet) := E_{\alpha} (\bullet)\, E_{\alpha}^{\dagger}$, with $\sum_{\alpha} E_{\alpha}^{\dagger} E_{\alpha} = \mathbb{I}$.

We can construct the decomposition explicitly. Fix a product basis of $\mathcal{H}_{\mathrm{i}} \otimes \mathcal{H}_{\mathrm{o}}$, with $d=d_{\mathrm{i}}d_{\mathrm{o}}$, and label basis vectors $|kl\rangle$ in lexicographic order (identifying $\alpha\equiv(k,l)$ when convenient). We define $d^2$ rank-one projectors $\{\Pi_\alpha\}$ forming a positive operator basis on $\mathcal{H}_{\mathrm{i}} \otimes \mathcal{H}_{\mathrm{o}}$. The first $d$ elements are
\begin{equation}
    \Pi_{kk;kk} = |kk\rangle \langle kk|.
\end{equation}
For $j<k$, we include symmetric and antisymmetric combinations
\begin{align}
  \Pi_{jk}^{(+)} 
&= \frac{1}{2}\bigl(|jj\rangle + |kk\rangle\bigr)\bigl(\langle jj| + \langle kk|\bigr),\\
\Pi_{jk}^{(-)} 
&= \frac{1}{2}\bigl(|jj\rangle + i|kk\rangle\bigr)\bigl(\langle jj| - i\langle kk|\bigr).
\end{align}
The set $\{\Pi_\alpha\}$ is linearly independent and positive semidefinite. Define
\begin{equation}
   \Pi := \sum_{\alpha=1}^{d^2} \Pi_\alpha,
\qquad
K_\alpha := \Pi^{-1/2}\Pi_\alpha\Pi^{-1/2}, 
\end{equation}
which forms an informationally complete POVM. The associated CPTNI maps are obtained via the Choi isomorphism,
\begin{equation}
\mathcal{I}_\alpha(\rho)
=
\operatorname{Tr}_{\mathrm{i}}\!\left[(\rho^{T}\otimes \mathbb{I}_{\mathrm{o}})\,K_\alpha\right].  
\end{equation}
This set $\{\mathcal{I}_\alpha\}$ forms a basis of the space of superoperators.
Define the Gram matrix
\begin{equation}
    G_{\alpha\beta} = \operatorname{Tr}(K_{\alpha}K_{\beta}).
\end{equation}
For a superoperator $\mathcal{F}$, let its Choi matrix be
\(
\Phi_{\mathcal{F}}
=
\sum_{i,j}
|i\rangle\langle j|
\otimes
\mathcal{F}(|i\rangle\langle j|)
\in \mathbf{B}(\mathcal{H}_{\mathrm{i}} \otimes \mathcal{H}_{\mathrm{o}}).
\)
We define the coefficients
\begin{equation}
    \chi_{\alpha} = \sum_{\beta} (G^{-1})_{\alpha\beta}\, \operatorname{Tr}\!\left(K_{\beta}\Phi_{\mathcal{F}}\right).
\end{equation}
This yields the decomposition of $\mathcal{F}$ in Theorem~\ref{thm:snapshotting}.

Based on the above result, any phase-space operation can be realized by postprocessing the outcomes of the quantum instrument as
\begin{equation}
    \mathcal{P}_{q_k} = \sum_{\alpha} \chi_{\alpha} \mathcal{E}_{\alpha},
    \label{eq:phase_space_operation_CP_combination}
\end{equation}
through which we can obtain the TQD. We refer to this approach as \emph{quantum snapshotting}, following Ref.~\cite{Wang2024KD}.

\vspace{1em}
\emph{Temporal state tomography based on quantum snapshotting of temporal quasiprobability distributions.} ---
Given a multi-time quantum process, an IC-TQD fully determines the corresponding temporal state. By experimentally obtaining the TQD $Q(\beta_n,\cdots,\beta_0)$ via quantum snapshotting, one can construct an estimate $\Upsilon^e$ using the temporal Bloch representation~\cite{Jia2025TemporalKirkwoodDirac}. The temporal state tomography (TST) task is then formulated as the following optimization problem:
\begin{equation}
  \hat{\Upsilon} = \operatorname*{argmin}_{\Upsilon \in \mathbf{TS}(\mathcal{H}_{t_n}\otimes\cdots\otimes \mathcal{H}_{t_0})} \|\Upsilon - \Upsilon^e\|,
\end{equation}
where the norm can be selected for convenience; common choices include the trace norm and other Schatten $p$-norms (infidelity is not applicable here, as temporal states are generally non-positive semidefinite.)

As the IC-TQD is equivalent to the temporal state, a natural question is whether one can directly obtain an optimized TQD from experimental data $Q^e(\beta)$ via
\begin{equation}
  \hat{Q}(\vec{\beta}) = \operatorname*{argmin}_{Q \in \mathbf{TQD}} \|Q - Q^e\|.
\end{equation}
This procedure is not viable, however, since the full space of valid TQDs is typically difficult to characterize. We thus perform temporal state tomography entirely within the temporal state formalism.

This framework unifies state and process tomography: once $\Upsilon$ is reconstructed, reduced temporal marginals yield the states at individual time instances, while conditional temporal slices encode the dynamical map between successive times which can be reconstructed using Eq.~\eqref{eq:TemporalStateExp}. 
Hence, both static states and quantum evolutions are recovered within a single operational object.
For example, two-time stata is given by $\Upsilon=\mathcal{E}_{t_1\leftarrow t_0}\star_{\mathrm{TS}}  \rho_{t_0}$, from TST obtained $\hat{\Upsilon}_{t_1t_0}$, we first calculate $\hat{\rho}_{t_0}$ as reduced temporal state and then solve the equation $\hat{\Upsilon}_{t_1t_0}= \hat{\mathcal{E}}_{t_1\leftarrow t_0}\star_{\mathrm{TS}}  \rho_{t_0}$ to obtain $\hat{\mathcal{E}}_{t_1\leftarrow t_0}$. The generalization to multi-time case is straightforward.

Equation~\eqref{eq:TemporalStateExp} also suggests a practical parameterization for the optimization: one may represent $\Upsilon$ in terms of the initial state and the Choi matrices of the intermediate channels, and optimize over these variables. Since both the initial state space and the Choi matrix space are well characterized, this renders the optimization tractable.

Different choices of temporal quasiprobability representations lead to distinct temporal state reconstructions:
\begin{itemize}
\item \emph{Left/right Kirkwood--Dirac (KD) TQD:} These yield the left/right KD temporal states.
\item \emph{Doubled Kirkwood--Dirac (KD) TQD:} This yields the doubled density operator; in this case, the local Hilbert space is doubled as $\mathcal{H}_{t_i}=\mathcal{H}_{t_i,L}\otimes \mathcal{H}_{t_i,R}$.
\item \emph{Left/right Margenau--Hill (MH) TQD:} Defined as the real part of the left/right KD TQD, these coincide and yield the left/right MH temporal states, and the resulting MH temporal state can be regarded as the Hermitian part of the left/right KD temporal states. In the two-time case, the MH temporal state reduces to the pseudo-density operator.
\item \emph{Doubled Margenau--Hill (MH) TQD:} Defined as the real part of the doubled KD TQD, this yields the doubled MH temporal state. The local Hilbert space is likewise doubled, and the resulting MH temporal state can be regarded as the Hermitian part of the doubled density operator.
\end{itemize}

\vspace{1em}
\emph{Estimating sample complexity.} ---
Treating the temporal quantum process as a black box, one may ask for the minimal number of copies required for faithful reconstruction of the temporal state, which defines its sample complexity.

Since the temporal state $\Upsilon$ admits a linear reconstruction from the TQD and shares the same operator structure as a density operator on a Hilbert space of dimension
$d = \prod_{i=0}^n d_{t_i}$,
the estimation problem reduces to that of an informationally complete tomography task in a $d$-dimensional operator space. We assume that at each time step there are $m_i$ quantum instrument maps in the decomposition underlying the quantum snapshotting scheme, and denote $M=\prod_{i=0}^n m_i$.

Let $\hat{\Upsilon}$ be constructed from experimental estimates of the TQD. Standard concentration results for informationally complete measurements imply that $\hat{\Upsilon}$ concentrates around $\Upsilon$, with the estimation error governed by classical statistical fluctuations of the underlying quasiprobability distribution. The temporal state $\Upsilon$ admits the expansion
\begin{equation}
\Upsilon = \sum_{\vec{\beta}} G_{\vec{\beta}}\, Q(\vec{\beta}), \qquad
Q(\vec{\beta}) = \sum_{\vec{\alpha}} M_{\vec{\beta},\vec{\alpha}}\, p(\vec{\alpha}).
\end{equation}
where $Q(\vec{\beta})$ is TQD and $p(\vec{\alpha})$ the probability distribution based on time-ordered quantum instrument, $M_{\vec{\beta},\vec{\alpha}}$ is the postprocessing matrix.
Define
\begin{equation}
T_{\vec{\alpha}} := \sum_{\vec{\beta}} G_{\vec{\beta}}\, M_{\vec{\beta},\vec{\alpha}},
\end{equation}
so that
\begin{equation}
\Upsilon = \sum_{\vec{\alpha}} T_{\vec{\alpha}}\, p(\vec{\alpha}), \qquad
\hat{\Upsilon} = \sum_{\vec{\alpha}} T_{\vec{\alpha}}\, \hat{p}(\vec{\alpha}).
\end{equation}
This representation directly relates the estimation error of the experimentally accessible probability distribution $p(\vec{\alpha})$ to that of the reconstructed temporal state.

Consequently, techniques from quantum state tomography extend naturally to the temporal setting, allowing one to bound the sample complexity of TST using techniques for informationally complete tomography.

\begin{theorem}[Sample complexity of TST]
Let $\Upsilon \in \mathbf{B}(\mathcal{H}_{t_n}\otimes \cdots \otimes \mathcal{H}_{t_0})$ be an $(n+1)$-step temporal quantum state, where $\dim(\mathcal{H}_{t_i}) = d_{t_i}$ and let $d = \prod_{i=0}^n d_{t_i}$ be total dimension and  $M=\prod_{i=0}^n m_i$ be total number of quantum instrument maps in quantum snapshotting. Then there exists a tomography scheme such that for any $\varepsilon,\delta \in (0,1)$,
\begin{equation}
\mathbb{P}\!\left(\|\hat{\Upsilon}-\Upsilon\|_2 \ge \varepsilon\right) \le \delta
\end{equation}
using
\begin{equation}
N = O\!\left(\frac{M}{\varepsilon^2}\log\frac{M}{\delta}\right)
\end{equation}
samples. 
When $M =  \Theta\!\left(\prod_{i=0}^n d_{t_i}^2\right) = \Theta(d^2)$, we have
$N = O\!\left(\frac{d^2}{\varepsilon^2}\log\frac{d^2}{\delta}\right)$.
\end{theorem}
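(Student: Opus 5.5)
The estimator is the linear one introduced just above,
\begin{equation}
\hat{\Upsilon}=\sum_{\vec{\alpha}} T_{\vec{\alpha}}\,\hat{p}(\vec{\alpha}).
\end{equation}
Here $\hat{p}$ is the empirical frequency of outcome strings $\vec{\alpha}$, obtained by running $N$ independent copies of the process. At each time $t_i$ the snapshotting instrument of Theorem~\ref{thm:snapshotting} is applied, and it has $m_i$ outcomes. This gives a genuine classical distribution $p$ on $M=\prod_i m_i$ outcomes, with $\Upsilon=\sum_{\vec\alpha}T_{\vec\alpha}p(\vec\alpha)$. The error is therefore
\begin{equation}
\hat{\Upsilon}-\Upsilon=\sum_{\vec\alpha}T_{\vec\alpha}\bigl(\hat p(\vec\alpha)-p(\vec\alpha)\bigr),
\end{equation}
and the whole task reduces to controlling the classical fluctuations $\hat p-p$ and then transferring them through the linear map $T$.

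First, I would apply Hoeffding's inequality to each of the $M$ Bernoulli frequencies and take a union bound. This gives $\max_{\vec\alpha}|\hat p(\vec\alpha)-p(\vec\alpha)|\le \eta$ with probability at least $1-\delta$ once $N\ge \frac{1}{2\eta^2}\log\frac{2M}{\delta}$. The cleaner route, which I would actually use, bounds the $\ell_2$ deviation. Since $\mathbb{E}\|\hat p-p\|_2^2=\frac{1}{N}(1-\sum p^2)\le 1/N$, McDiarmid's inequality applied to $\|\hat p-p\|_2$ (bounded differences of order $\sqrt2/N$) gives $\|\hat p-p\|_2\le \sqrt{1/N}+\sqrt{\log(1/\delta)/N}$ up to constants with probability $1-\delta$.

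Next comes the transfer step. Writing $\hat\Upsilon-\Upsilon=\mathcal{T}(\hat p-p)$ with $\mathcal{T}:\mathbb{R}^M\to \mathbf{B}(\mathcal{H})$ linear, I get $\|\hat\Upsilon-\Upsilon\|_2\le \|\mathcal{T}\|_{\ell_2\to 2}\,\|\hat p-p\|_2$. Alternatively, $\|\hat\Upsilon-\Upsilon\|_2\le \max_{\vec\alpha}\|T_{\vec\alpha}\|_2\,\|\hat p-p\|_1$, combined with $\|\hat p-p\|_1\le\sqrt{M}\|\hat p-p\|_2$ or a direct $\ell_1$ concentration $\|\hat p-p\|_1\lesssim\sqrt{(M+\log(1/\delta))/N}$. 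The latter form yields exactly
\begin{equation}
N=O\!\left(\frac{C^2 M}{\varepsilon^2}\log\frac{M}{\delta}\right),
\end{equation}
where $C$ bounds the norms $\|T_{\vec\alpha}\|_2$. Because $T_{\vec\alpha}$ factorizes as a tensor product over time steps, $T_{\vec\alpha}=\bigotimes_i T^{(i)}_{\alpha_i}$ (both $G_{\vec\beta}$ and $M_{\vec\beta,\vec\alpha}$ factorize), so $C=\prod_i C_i$. Each $C_i$ depends only on the chosen local frame and instrument, and not on $N$ or $\delta$; I would absorb these into the $O(\cdot)$ as fixed constants for the chosen scheme. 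The $\log M$ factor comes from the union bound version, and I would keep it to match the statement. The specialization is then immediate: taking $m_i=\Theta(d_{t_i}^2)$, which is the minimal informationally complete size and is achievable by Theorem~\ref{thm:snapshotting} restricted to the relevant operator space, gives $M=\Theta(d^2)$ and the stated bound.

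The main obstacle is the constant $C=\max\|T_{\vec\alpha}\|_2$. It involves the dual frame $G$ and the postprocessing coefficients $\chi_\alpha$ from the inverse Gram matrix, and these could grow with the dimension, for example through $\Pi^{-1/2}$ in the explicit construction. I would first try to show that each local $\|T^{(i)}_{\alpha}\|_2$ is $O(1)$ for a well-conditioned choice, such as a SIC-like or product Pauli-type instrument, so that the product stays a scheme-dependent constant. If that fails, I would state the bound with $C^2$ explicit and note that it is absorbed into the big-$O$ for fixed local dimensions.
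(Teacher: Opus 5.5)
Your skeleton matches the paper's: the linear estimator $\hat\Upsilon=\sum_{\vec\alpha}T_{\vec\alpha}\hat p(\vec\alpha)$, Hoeffding with a union bound over the $M$ trajectories, then transfer through $T$. The paper closes by bounding $\|\hat p-p\|_2\le\sqrt M\|\hat p-p\|_\infty$ and simply asserting $\|T\|_{\ell_2\to2}=O(1)$ for a ``well-conditioned'' scheme.

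The genuine gap is the step you flag yourself, and your plan for it cannot work: $C=\max_{\vec\alpha}\|T_{\vec\alpha}\|_2$ is never $O(1)$. Take every channel to be a replacement channel $X\mapsto\operatorname{Tr}(X)\sigma_k$. Then (e.g.\ for the right KD state) $\Upsilon=\sigma_n\otimes\cdots\otimes\sigma_0$ and $p(\vec\alpha)=\operatorname{Tr}[F_{\vec\alpha}\Upsilon]$, where $F_{\vec\alpha}=\bigotimes_k\mathcal I_{\alpha_k}^{\dagger}(\mathbb I)$ is a POVM on the $d$-dimensional space. Product states span, so $X=\sum_{\vec\alpha}T_{\vec\alpha}\operatorname{Tr}(F_{\vec\alpha}X)$ for every $X$. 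Taking the trace of this identity superoperator gives
\[
d^2=\sum_{\vec\alpha}\operatorname{Tr}(F_{\vec\alpha}T_{\vec\alpha})\le C\sum_{\vec\alpha}\|F_{\vec\alpha}\|_2\le C\sum_{\vec\alpha}\operatorname{Tr}F_{\vec\alpha}=Cd ,
\]
so $C\ge d$. For comparison, single-step SIC tomography has $\|T_\alpha\|_2^2=d^2+d-1$.

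Your $\ell_1$ route therefore gives at best $N=O(d^2M\log(M/\delta)/\varepsilon^2)$, i.e.\ $d^4$ rather than $d^2$ when $M=\Theta(d^2)$. Absorbing $C^2$ into the $O(\cdot)$ ``for fixed local dimensions'' does not fix this. Even with fixed local dimensions, $C\ge\prod_i d_{t_i}$ grows exponentially in the number of time steps, and the $M$/$d^2$ scaling is the entire content of the theorem. Mirroring the paper more closely does not help either. The same reduction gives $\|T\|_{\ell_2\to2}\ge1/\sigma_{\min}\ge\sqrt d$, because the squared Hilbert--Schmidt norm $\sum_{\vec\alpha}\|F_{\vec\alpha}\|_2^2\le d$ of the measurement map is spread over $d^2$ singular values. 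So the paper's $O(1)$ assumption is equally impossible.

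The real loss in your argument is the $\ell_1$ transfer, which costs a factor $M$. Apply your McDiarmid argument directly to $\hat\Upsilon-\Upsilon=\frac1N\sum_{i=1}^N\bigl(T_{\vec\alpha^{(i)}}-\Upsilon\bigr)$, an average of i.i.d.\ mean-zero vectors in Hilbert--Schmidt space. Its second moment is $\mathbb E\|\hat\Upsilon-\Upsilon\|_2^2=\frac1N\bigl(\sum_{\vec\alpha}p(\vec\alpha)\|T_{\vec\alpha}\|_2^2-\|\Upsilon\|_2^2\bigr)\le C^2/N$, and the bounded differences are $2C/N$. Hence $\|\hat\Upsilon-\Upsilon\|_2\le C\bigl(1+\sqrt{2\log(1/\delta)}\bigr)/\sqrt N$ with probability at least $1-\delta$, i.e.\ $N=O(C^2\log(1/\delta)/\varepsilon^2)$.

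The stated bound then follows from the explicit hypothesis $C^2=O(M)$. This is the correct meaning of ``well-conditioned'' here, and it is consistent with $C\ge d$ and $M\ge d^2$. It must be assumed of the scheme: it does not follow from informational completeness, and the per-step ratios $C_i^2/m_i$ multiply across time steps.

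Finally, Theorem~\ref{thm:snapshotting} as constructed gives $m_i=d_{t_i}^4$, not $d_{t_i}^2$. The $d^2$ clause is conditional on $M=\Theta(d^2)$, so do not claim achievability from that theorem.
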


The sample complexity bound for temporal state tomography can be proven via nearly identical reasoning to its spatial counterpart. The key distinction is that a temporal state is not a positive semidefinite operator—a limitation resolved by the quantum snapshotting framework of TQD. Detailed derivations are given in the Supplementary Material.

\vspace{1em}
\emph{Discussion.}---
In this work, we introduce TST as a new type of quantum tomography task, which unifies the reconstruction of density operators and quantum channels within a single framework. 
Several directions merit further investigation. First, it would be desirable to identify optimal quantum instrument bases for decomposing phase-space superoperators. A closely related problem is to obtain sharper sample-complexity bounds by leveraging advanced techniques from spatial quantum state tomography. Another intriguing direction is to explore whether the nonclassical features of the TQD (e.g., negativity or complex-valueness) influence the efficiency of tomography. This question is closely connected to non-Markovianity and will be investigated in Ref.~\cite{Kelvin2025KD}. It would also be interesting the comparing the TST with the tomography of process tensor and quantum comb \cite{White2022processtom,Antesberger2024highertomography,Li2025combtomography}.
These questions are left for future work.

\begin{acknowledgments}
Z. J. thanks Dagomir Kaszlikowski, Kavan Modi, Kelvin Onggadinata, Koh Teck Seng, John Kam, Yunlong Xiao, and Zhihao Ma for discussions on temporal quasiprobabilities.
This work is supported by the startup research grant of Central South University.
\end{acknowledgments}

\bibliographystyle{apsrev4-1-title}
\bibliography{Jiabib}

@article{ding2026quantifying,
  title={Quantifying the Nonclassicality of the Kirkwood--Dirac Quasiprobability Distribution Under Discrete-Time Dynamics},
  author={Ding, Ziheng and Zhou, Si-Qi},
  journal={Entropy},
  volume={28},
  number={4},
  pages={395},
  year={2026},
  url={https://www.mdpi.com/1099-4300/28/4/395}
}

@article{Li2025combtomography,
  title = {Quantum Comb Tomography via Learning Isometries on Stiefel Manifold},
  author = {Li, Ze-Tong and He, Xin-Lin and Zheng, Cong-Cong and Dong, Yu-Qian and Luan, Tian and Yu, Xu-Tao and Zhang, Zai-Chen},
  journal = {Phys. Rev. Lett.},
  volume = {134},
  issue = {1},
  pages = {010803},
  numpages = {7},
  year = {2025},
  month = {Jan},
  publisher = {American Physical Society},
  doi = {10.1103/PhysRevLett.134.010803},
  url = {https://link.aps.org/doi/10.1103/PhysRevLett.134.010803},
  eprint={2404.06988},
  archivePrefix={arXiv},
  primaryClass={quant-ph} 
}

@article{Antesberger2024highertomography,
  title = {Higher-Order Process Matrix Tomography of a Passively-Stable Quantum Switch},
  author = {Antesberger, Michael and Quintino, Marco T\'ulio and Walther, Philip and Rozema, Lee A.},
  journal = {PRX Quantum},
  volume = {5},
  issue = {1},
  pages = {010325},
  numpages = {22},
  year = {2024},
  month = {Feb},
  publisher = {American Physical Society},
  doi = {10.1103/PRXQuantum.5.010325},
  url = {https://link.aps.org/doi/10.1103/PRXQuantum.5.010325},
          eprint={2305.19386 },
  archivePrefix={arXiv},
  primaryClass={quant-ph} 
}

@article{Li2020hamiltoniantom,
  title = {Hamiltonian Tomography via Quantum Quench},
  author = {Li, Zhi and Zou, Liujun and Hsieh, Timothy H.},
  journal = {Phys. Rev. Lett.},
  volume = {124},
  issue = {16},
  pages = {160502},
  numpages = {6},
  year = {2020},
  month = {Apr},
  publisher = {American Physical Society},
  doi = {10.1103/PhysRevLett.124.160502},
  url = {https://link.aps.org/doi/10.1103/PhysRevLett.124.160502},
        eprint={1912.09492},
  archivePrefix={arXiv},
  primaryClass={quant-ph} 
}

@article{Mohseni2008processtom,
  title = {Quantum-process tomography: Resource analysis of different strategies},
  author = {Mohseni, M. and Rezakhani, A. T. and Lidar, D. A.},
  journal = {Phys. Rev. A},
  volume = {77},
  issue = {3},
  pages = {032322},
  numpages = {15},
  year = {2008},
  month = {Mar},
  publisher = {American Physical Society},
  doi = {10.1103/PhysRevA.77.032322},
  url = {https://link.aps.org/doi/10.1103/PhysRevA.77.032322},
      eprint={quant-ph/0702131},
  archivePrefix={arXiv},
  primaryClass={quant-ph} 
}

@article{White2022processtom,
  title = {Non-Markovian Quantum Process Tomography},
  author = {White, G.A.L. and Pollock, F.A. and Hollenberg, L.C.L. and Modi, K. and Hill, C.D.},
  journal = {PRX Quantum},
  volume = {3},
  issue = {2},
  pages = {020344},
  numpages = {30},
  year = {2022},
  month = {May},
  publisher = {American Physical Society},
  doi = {10.1103/PRXQuantum.3.020344},
  url = {https://link.aps.org/doi/10.1103/PRXQuantum.3.020344},
    eprint={2106.11722},
  archivePrefix={arXiv},
  primaryClass={quant-ph} 
}

@Article{Anshu2024,
author={Anshu, Anurag
and Arunachalam, Srinivasan},
title={A survey on the complexity of learning quantum states},
journal={Nature Reviews Physics},
year={2024},
month={Jan},
day={01},
volume={6},
number={1},
pages={59-69},
issn={2522-5820},
doi={10.1038/s42254-023-00662-4},
url={https://doi.org/10.1038/s42254-023-00662-4},
  eprint={2305.20069},
  archivePrefix={arXiv},
  primaryClass={quant-ph} 
}

@misc{Kelvin2025KD,
      title={Temporal Kirkwood-Dirac quasiprobability distribution and quantum non-Markovian process}, 
      author={Kelvin Onggadinata{\,\,et al.}},
      year={2025}, 
}

@misc{lie2025probingquantumstatesspacetime,
      title={Probing Quantum States Over Spacetime Through Interferometry}, 
      author={Seok Hyung Lie and Hyukjoon Kwon},
      year={2025},
      eprint={2507.19258},
      archivePrefix={arXiv},
      primaryClass={quant-ph},
      url={https://arxiv.org/abs/2507.19258}, 
}

@article{Lie2025stateovertime,
  title = {Multipartite Quantum States over Time from Two Fundamental Assumptions},
  author = {Lie, Seok Hyung and Fullwood, James},
  journal = {Phys. Rev. Lett.},
  volume = {135},
  issue = {23},
  pages = {230204},
  numpages = {8},
  year = {2025},
  month = {Dec},
  publisher = {American Physical Society},
  doi = {10.1103/lbf3-snp8},
  url = {https://link.aps.org/doi/10.1103/lbf3-snp8}
}

@misc{Jia2025TemporalKirkwoodDirac,
   title={Temporal Kirkwood-Dirac Quasiprobability Distribution and Unification of Temporal State Formalisms through Temporal Bloch Tomography}, 
      author={Zhian Jia and Kavan Modi and Dagomir Kaszlikowski},
      year={2026},
      eprint={2601.05294},
      archivePrefix={arXiv},
      primaryClass={quant-ph},
      url={https://arxiv.org/abs/2601.05294}, 
}

@Article{Wang2024KD,
author={Wang, Pengfei
and Kwon, Hyukjoon
and Luan, Chun-Yang
and Chen, Wentao
and Qiao, Mu
and Zhou, Zinan
and Wang, Kaizhao
and Kim, M. S.
and Kim, Kihwan},
title={Snapshotting quantum dynamics at multiple time points},
journal={Nature Communications},
year={2024},
month={Oct},
day={16},
volume={15},
number={1},
pages={8900},
issn={2041-1723},
doi={10.1038/s41467-024-53051-5},
url={https://doi.org/10.1038/s41467-024-53051-5},
  eprint={2207.06106},
  archivePrefix={arXiv},
  primaryClass={quant-ph} 
}

@article{Gherardini2024KD,
  title = {Quasiprobabilities in Quantum Thermodynamics and Many-Body Systems},
  author = {Gherardini, Stefano and De Chiara, Gabriele},
  journal = {PRX Quantum},
  volume = {5},
  issue = {3},
  pages = {030201},
  numpages = {38},
  year = {2024},
  month = {Sep},
  publisher = {American Physical Society},
  doi = {10.1103/PRXQuantum.5.030201},
  url = {https://link.aps.org/doi/10.1103/PRXQuantum.5.030201},
  eprint={2403.17138},
  archivePrefix={arXiv},
  primaryClass={quant-ph} 
}

@misc{fullwood2025spatiotemporalbornrule,
      title={The spatiotemporal Born rule is quasiprobabilistic}, 
      author={James Fullwood and Zhihao Ma and Zhen Wu},
      year={2025},
      eprint={2507.16919},
      archivePrefix={arXiv},
      primaryClass={quant-ph},
      url={https://arxiv.org/abs/2507.16919}, 
}

@article{margenau1961correlation,
  title={Correlation between measurements in quantum theory},
  author={Margenau, Henry and Hill, Robert Nyden},
  journal={Progress of Theoretical Physics},
  volume={26},
  number={5},
  pages={722--738},
  year={1961},
  publisher={Oxford University Press},
  url={https://academic.oup.com/ptp/article/26/5/722/1936017},
  doi={10.1143/PTP.26.722},
}

@article{jia2024spatiotemporal,
  title={The Spatiotemporal Doubled-Density Operator: A Unified Framework for Analyzing Spatial and Temporal Quantum Processes},
  author={Jia, Zhian and Kaszlikowski, Dagomir},
  journal={Advanced Quantum Technologies},
  volume={7},
  number={11},
  pages={2400102},
  year={2024},
  publisher={Wiley Online Library},
  url={https://advanced.onlinelibrary.wiley.com/doi/10.1002/qute.202400102},
  doi = {10.1002/qute.202400102},
  eprint={2305.15649},
  archivePrefix={arXiv},
  primaryClass={quant-ph}   
}

@article{Chiribella2009comb,
  title = {Theoretical framework for quantum networks},
  author = {Chiribella, Giulio and D'Ariano, Giacomo Mauro and Perinotti, Paolo},
  journal = {Phys. Rev. A},
  volume = {80},
  issue = {2},
  pages = {022339},
  numpages = {20},
  year = {2009},
  month = {Aug},
  publisher = {American Physical Society},
  doi = {10.1103/PhysRevA.80.022339},
  url = {https://link.aps.org/doi/10.1103/PhysRevA.80.022339},
  eprint={0904.4483},
  archivePrefix={arXiv},
  primaryClass={quant-ph}   
}

@article{Aharonov2009multi,
  title = {Multiple-time states and multiple-time measurements in quantum mechanics},
  author = {Aharonov, Yakir and Popescu, Sandu and Tollaksen, Jeff and Vaidman, Lev},
  journal = {Phys. Rev. A},
  volume = {79},
  issue = {5},
  pages = {052110},
  numpages = {16},
  year = {2009},
  month = {May},
  publisher = {American Physical Society},
  doi = {10.1103/PhysRevA.79.052110},
  url = {https://link.aps.org/doi/10.1103/PhysRevA.79.052110},
  eprint={0712.0320},
  archivePrefix={arXiv},
  primaryClass={quant-ph}    
}

@article{Parzygnat2023pdo,
  title = {From Time-Reversal Symmetry to Quantum Bayes' Rules},
  author = {Parzygnat, Arthur J. and Fullwood, James},
  journal = {PRX Quantum},
  volume = {4},
  issue = {2},
  pages = {020334},
  numpages = {26},
  year = {2023},
  month = {Jun},
  publisher = {American Physical Society},
  doi = {10.1103/PRXQuantum.4.020334},
  url = {https://link.aps.org/doi/10.1103/PRXQuantum.4.020334},
eprint={2212.08088},
  archivePrefix={arXiv},
  primaryClass={quant-ph}  
}

@Article{Liu2025PDO,
author={Liu, Xiangjing
and Qiu, Yixian
and Dahlsten, Oscar
and Vedral, Vlatko},
title={Quantum causal inference with extremely light touch},
journal={npj Quantum Information},
year={2025},
month={Mar},
day={29},
volume={11},
number={1},
pages={54},
issn={2056-6387},
doi={10.1038/s41534-024-00956-0},
url={https://doi.org/10.1038/s41534-024-00956-0},  
eprint={2303.10544},
  archivePrefix={arXiv},
  primaryClass={quant-ph}  
}

@article{Lostaglio2023kirkwooddirac,
  doi = {10.22331/q-2023-10-09-1128},
  url = {https://doi.org/10.22331/q-2023-10-09-1128},
  title = {Kirkwood-{D}irac quasiprobability approach to the statistics of incompatible observables},
  author = {Lostaglio, Matteo and Belenchia, Alessio and Levy, Amikam and Hern{\'{a}}ndez-G{\'{o}}mez, Santiago and Fabbri, Nicole and Gherardini, Stefano},
  journal = {{Quantum}},
  issn = {2521-327X},
  publisher = {{Verein zur F{\"{o}}rderung des Open Access Publizierens in den Quantenwissenschaften}},
  volume = {7},
  pages = {1128},
  month = oct,
  year = {2023},
   eprint={2206.11783 },
  archivePrefix={arXiv},
  primaryClass={quant-ph}   
}

@article{ArvidssonShukur2024KDreview,
doi = {10.1088/1367-2630/ada05d},
url = {https://dx.doi.org/10.1088/1367-2630/ada05d},
year = {2024},
month = {dec},
publisher = {IOP Publishing},
volume = {26},
number = {12},
pages = {121201},
author = {Arvidsson-Shukur, David R M and Braasch Jr, William F and De Bièvre, Stephan and Dressel, Justin and Jordan, Andrew N and Langrenez, Christopher and Lostaglio, Matteo and Lundeen, Jeff S and Halpern, Nicole Yunger},
title = {Properties and applications of the Kirkwood–Dirac distribution},
journal = {New Journal of Physics},
   eprint={2403.18899},
  archivePrefix={arXiv},
  primaryClass={quant-ph}   
}

@article{Song2024causal,
  title = {Causal Classification of Spatiotemporal Quantum Correlations},
  author = {Song, Minjeong and Narasimhachar, Varun and Regula, Bartosz and Elliott, Thomas J. and Gu, Mile},
  journal = {Phys. Rev. Lett.},
  volume = {133},
  issue = {11},
  pages = {110202},
  numpages = {6},
  year = {2024},
  month = {Sep},
  publisher = {American Physical Society},
  doi = {10.1103/PhysRevLett.133.110202},
  url = {https://link.aps.org/doi/10.1103/PhysRevLett.133.110202},
   eprint={2306.09336 },
  archivePrefix={arXiv},
  primaryClass={quant-ph}   
}

@article {liu2023unification,
    AUTHOR = {Liu, Xiangjing and Jia, Zhian and Qiu, Yixian and Li, Fei and
              Dahlsten, Oscar},
     TITLE = {Unification of spatiotemporal quantum formalisms: mapping
              between process and pseudo-density matrices via multiple-time
              states},
   JOURNAL = {New J. Phys.},
  FJOURNAL = {New Journal of Physics},
    VOLUME = {26},
      YEAR = {2024},
    NUMBER = {March},
     PAGES = {Paper No. 033008, 15},
   MRCLASS = {81P16},
  MRNUMBER = {4730855},
       DOI = {10.1088/1367-2630/ad264c},
       URL = {https://doi.org/10.1088/1367-2630/ad264c},
    eprint={2306.05958},
  archivePrefix={arXiv},
  primaryClass={quant-ph}   
}

@article{fullwood2023quantum,
  title={Quantum dynamics as a pseudo-density matrix},
  author={Fullwood, James},
  journal={Quantum},
  volume={9},
  pages={1719},
  year={2025},
  url={https://doi.org/10.22331/q-2025-04-24-1719},
  eprint={2304.03954},
  archivePrefix={arXiv},
  primaryClass={quant-ph}      
}

@article{fullwood2022quantum,
  title={On quantum states over time},
  author={Fullwood, James and Parzygnat, Arthur J},
  journal={Proceedings of the Royal Society A},
  volume={478},
  number={2264},
  pages={20220104},
  year={2022},
  publisher={The Royal Society},
  url={https://royalsocietypublishing.org/doi/abs/10.1098/rspa.2022.0104},
    eprint={2202.03607},
  archivePrefix={arXiv},
  primaryClass={quant-ph}   
}

@article{Dirac1945on,
  title = {On the Analogy Between Classical and Quantum Mechanics},
  author = {Dirac, P. A. M.},
  journal = {Rev. Mod. Phys.},
  volume = {17},
  issue = {2-3},
  pages = {195--199},
  numpages = {0},
  year = {1945},
  month = {Apr},
  publisher = {American Physical Society},
  doi = {10.1103/RevModPhys.17.195},
  url = {https://link.aps.org/doi/10.1103/RevModPhys.17.195}
}

@article{Kirkwood1933quantum,
  title = {Quantum Statistics of Almost Classical Assemblies},
  author = {Kirkwood, John G.},
  journal = {Phys. Rev.},
  volume = {44},
  issue = {1},
  pages = {31--37},
  numpages = {0},
  year = {1933},
  month = {Jul},
  publisher = {American Physical Society},
  doi = {10.1103/PhysRev.44.31},
  url = {https://link.aps.org/doi/10.1103/PhysRev.44.31}
}

@inproceedings{gutoski2007toward,
 title={Toward a general theory of quantum games},
  author={Gutoski, Gus and Watrous, John},
  booktitle={Proceedings of the thirty-ninth annual ACM symposium on Theory of computing},
  pages={565--574},
  year={2007},
  url={https://dl.acm.org/doi/10.1145/1250790.1250873},
  eprint={quant-ph/0611234},
  archivePrefix={arXiv},
  primaryClass={quant-ph}    
}

@article{cotler2018superdensity,
  title={Superdensity operators for spacetime quantum mechanics},
  author={Cotler, Jordan and Jian, Chao-Ming and Qi, Xiao-Liang and Wilczek, Frank},
  journal={Journal of High Energy Physics},
  volume={2018},
  number={9},
  pages={1--57},
  year={2018},
  publisher={Springer},
  url={https://link.springer.com/article/10.1007/JHEP09(2018)093},
  eprint={1711.03119 },
  archivePrefix={arXiv},
  primaryClass={quant-ph}       
}

@article{griffiths1984consistent,
  title={Consistent histories and the interpretation of quantum mechanics},
  author={Griffiths, Robert B},
  journal={Journal of Statistical Physics},
  volume={36},
  number={1},
  pages={219--272},
  year={1984},
  publisher={Springer},
  url={https://link.springer.com/article/10.1007/BF01015734}
}

@article{oreshkov2012quantum,
  title={Quantum correlations with no causal order},
  author={Oreshkov, Ognyan and Costa, Fabio and Brukner, {\v{C}}aslav},
  journal={Nature Communications},
  volume={3},
  number={1},
  pages={1--8},
  year={2012},
  publisher={Nature Publishing Group},
  url={https://www.nature.com/articles/ncomms2076},
  eprint={1105.4464},
  archivePrefix={arXiv},
  primaryClass={quant-ph}     
}

@article{Wigner1932on,
  title = {On the Quantum Correction For Thermodynamic Equilibrium},
  author = {Wigner, E.},
  journal = {Phys. Rev.},
  volume = {40},
  issue = {5},
  pages = {749--759},
  numpages = {0},
  year = {1932},
  month = {Jun},
  publisher = {American Physical Society},
  doi = {10.1103/PhysRev.40.749},
  url = {https://link.aps.org/doi/10.1103/PhysRev.40.749}
}

@article{fitzsimons2015quantum,
  title={Quantum correlations which imply causation},
  author={Fitzsimons, Joseph F and Jones, Jonathan A and Vedral, Vlatko},
  journal={Scientific Reports},
  volume={5},
  number={1},
  pages={1--7},
  year={2015},
  publisher={Nature Publishing Group},
  url={https://www.nature.com/articles/srep18281},
  eprint={1302.2731},
  archivePrefix={arXiv},
  primaryClass={quant-ph} 
}

@article{Pollock2018processtensor,
  title = {Non-Markovian quantum processes: Complete framework and efficient characterization},
  author = {Pollock, Felix A. and Rodr\'{\i}guez-Rosario, C\'esar and Frauenheim, Thomas and Paternostro, Mauro and Modi, Kavan},
  journal = {Phys. Rev. A},
  volume = {97},
  issue = {1},
  pages = {012127},
  numpages = {13},
  year = {2018},
  month = {Jan},
  publisher = {American Physical Society},
  doi = {10.1103/PhysRevA.97.012127},
  url = {https://link.aps.org/doi/10.1103/PhysRevA.97.012127}
}

\pagebreak
\clearpage
\widetext
\begin{center}
{\large \bfseries   Supplementary Material:
Temporal State Tomography via Quantum Snapshotting the Temporal Quasiprobabilities}
\end{center}
\setcounter{equation}{0}
\setcounter{figure}{0}
\setcounter{table}{0}
\setcounter{page}{1}
\makeatletter
\renewcommand{\theequation}{S\arabic{equation}}
\renewcommand{\thefigure}{S\arabic{figure}}
\renewcommand{\bibnumfmt}[1]{[S#1]}



\section{Proof of Theorem~\ref{thm:snapshotting}}
\label{app:KNnogo}

In this section, we provide a self-contained proof of Theorem~\ref{thm:snapshotting} from the main text.
We begin by recalling the Choi-Jamio\l kowski (CJ) isomorphism, a fundamental correspondence in quantum information theory that establishes a linear bijection between linear maps
$\mathcal F : \mathbf B(\mathcal H_{\mathrm{i}}) \to \mathbf B(\mathcal H_{\mathrm{o}})$
and operators acting on the tensor product Hilbert space
$\mathcal H_{\mathrm{i}} \otimes \mathcal H_{\mathrm{o}}$.

Fix an orthonormal basis $\{ |i\rangle \}$ of $\mathcal H_{\mathrm{i}}$.
For any linear map $\mathcal F$, its Choi operator is defined as
\begin{equation}
\Phi_{\mathcal F}
=
\sum_{i,j}
|i\rangle\langle j|
\otimes
\mathcal F(|i\rangle\langle j|)
\;\in\;
\mathbf B(\mathcal H_{\mathrm{i}} \otimes \mathcal H_{\mathrm{o}}).
\end{equation}
Conversely, given an operator
$\Phi_{\mathcal F} \in \mathbf B(\mathcal H_{\mathrm{i}} \otimes \mathcal H_{\mathrm{o}})$,
the corresponding linear map $\mathcal F$ is recovered via
\begin{equation}
\mathcal F(\rho)
=
\operatorname{Tr}_{\mathrm{i}}
\!\left[
(\rho^{T} \otimes \mathbb I_{\mathrm{o}})\,
\Phi_{\mathcal F}
\right],
\end{equation}
where the transpose is taken with respect to the chosen basis
$\{ |i\rangle \}$.

The Choi--Jamio\l kowski isomorphism thus establishes a linear bijection
between linear maps $\mathcal F$ and their Choi operators
$\Phi_{\mathcal F}$.
Moreover, Choi's theorem asserts that
$\mathcal F$ is completely positive if and only if
\begin{equation}
\Phi_{\mathcal F} \ge 0.
\end{equation}
Furthermore, $\mathcal F$ is trace-preserving if and only if
\begin{equation}
\operatorname{Tr}_{\mathrm{o}} \Phi_{\mathcal F}
=
\mathbb I_{\mathrm{i}} .
\end{equation}
This give a complete characterization of space of CPTP maps.

\begin{lemma}\label{lemma:Herm}
Let $\mathcal{H}$ be a (possibly infinite-dimensional) Hilbert space. Any bounded operator
$X \in \mathbf{B}(\mathcal{H})$ can be written as a finite linear combination of positive semidefinite operators.
\end{lemma}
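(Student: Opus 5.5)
The plan is to reduce the claim to two standard decompositions: first split an arbitrary bounded operator into Hermitian parts, then split each Hermitian operator into positive parts via the continuous functional calculus. This works in any Hilbert space, finite- or infinite-dimensional, with at most four terms.

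\emph{Step 1 (Cartesian decomposition).} For $X\in\mathbf{B}(\mathcal H)$, set
\[
A=\tfrac12(X+X^\dagger),\qquad B=\tfrac{1}{2i}(X-X^\dagger).
\]
Both are bounded and self-adjoint, and $X=A+iB$. It therefore suffices to write every bounded self-adjoint operator as a real linear combination of two positive semidefinite bounded operators.

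\emph{Step 2 (Jordan decomposition).} For self-adjoint $A$, use the continuous functional calculus on the spectrum $\sigma(A)\subset[-\|A\|,\|A\|]$. Define the functions $f_\pm(x)=\max(\pm x,0)$, which are continuous and nonnegative on $\sigma(A)$. Then set $A_\pm=f_\pm(A)$. Because $f_\pm\ge0$ on $\sigma(A)$, each $A_\pm$ is bounded and positive semidefinite. Because $f_+-f_-=\mathrm{id}$, we get $A=A_+-A_-$.

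An alternative that avoids spectral theory entirely is
\[
A=\tfrac12\bigl(\|A\|\mathbb I+A\bigr)-\tfrac12\bigl(\|A\|\mathbb I-A\bigr).
\]
Both operators on the right are positive, since $-\|A\|\mathbb I\le A\le\|A\|\mathbb I$, which follows from $|\langle\psi|A|\psi\rangle|\le\|A\|\,\|\psi\|^2$. I would present this elementary version as the main argument and mention the functional-calculus version as a remark.

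\emph{Step 3 (combine).} Applying Step 2 to both $A$ and $B$ gives
\[
X=A_+-A_-+iB_+-iB_-,
\]
a linear combination of four positive semidefinite operators with coefficients $\pm1,\pm i$.

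The only point needing care is the infinite-dimensional setting. There one must check that the pieces are bounded, and that positivity, defined as $\langle\psi|P|\psi\rangle\ge0$ for all $\psi$, holds without an eigenbasis. The norm-shift version handles both directly. In finite dimensions, which is what Theorem~\ref{thm:snapshotting} actually needs, Step 2 is just splitting the eigen-decomposition into positive and negative eigenvalues.

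For Theorem~\ref{thm:snapshotting}, the lemma then shows that positive operators span $\mathbf{B}(\mathcal H_{\mathrm i}\otimes\mathcal H_{\mathrm o})$. This is what justifies the existence of an IC-POVM there, for instance by normalizing a spanning positive family $\{\Pi_\alpha\}$ with $\Pi^{-1/2}$ as done in the main text.
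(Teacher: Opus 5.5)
Your proposal is correct. Its architecture matches the paper's exactly: the Cartesian split $X=A+iB$ into self-adjoint parts, then each self-adjoint part written as a difference of two positive operators, giving four terms with coefficients $\pm1,\pm i$.

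The one genuine difference is in the second step. The paper splits a Hermitian $H$ into its positive and negative spectral parts, $H=H_+-H_-$; this is your functional-calculus remark with $f_\pm(x)=\max(\pm x,0)$. Your main argument instead uses the shift $A=\tfrac12(\|A\|\mathbb{I}+A)-\tfrac12(\|A\|\mathbb{I}-A)$, which needs nothing beyond $|\langle\psi|A|\psi\rangle|\le\|A\|\,\|\psi\|^2$.

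Your version is more elementary, and it handles the ``possibly infinite-dimensional'' clause without extra machinery. The paper's one-line appeal to ``the spectral decomposition theorem'' tacitly needs the continuous functional calculus (or projection-valued measures) in that setting. Your remark supplies that, and your main argument avoids it altogether.

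What the paper's Jordan split buys in return is economy. It gives $H_+H_-=0$ and $\operatorname{Tr}H_++\operatorname{Tr}H_-=\|H\|_1$, which is the minimum of $\operatorname{Tr}P+\operatorname{Tr}N$ over all decompositions $H=P-N$ with $P,N\ge0$. Your shifted pieces are not orthogonally supported, and their total trace is $d\|A\|\ge\|A\|_1$ in dimension $d$; in infinite dimensions they fail to be trace class even when $A$ is. For the bare spanning statement this does not matter. It would matter if the lemma were used constructively, as in the CPTNI decomposition of the Supplementary Material, where coefficient magnitudes translate into postprocessing and sampling overhead; there the spectral split is typically the better choice.
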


\begin{proof}
Any operator $X \in \mathbf{B}(\mathcal{H})$ can be decomposed into linear combination of Hermitian operators
\begin{equation}
    X = \frac{X + X^{\dagger}}{2} + i\,\frac{X - X^{\dagger}}{2i},
\end{equation}
where both operators $(X + X^{\dagger})/2$ and $(X - X^{\dagger})/(2i)$ are Hermitian.
By the spectral decomposition theorem, any Hermitian operator $H$ admits a decomposition $H = H_{+} - H_{-}$, where $H_{+}$ and $H_{-}$ are positive semidefinite operators corresponding to the positive and negative parts of the spectrum. Consequently, each Hermitian operator is a real linear combination of positive semidefinite operators.
Combining these decompositions shows that $X$ itself can be expressed as a finite linear combination of positive semidefinite operators.
\end{proof}

\begin{lemma}\label{lemma:CP}
Any completely positive (CP) map $\mathcal F$ can be rescaled into a completely positive, trace non-increasing (CPTNI) map.
\end{lemma}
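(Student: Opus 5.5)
Since a CP map $\mathcal F:\mathbf B(\mathcal H_{\mathrm i})\to\mathbf B(\mathcal H_{\mathrm o})$ is determined by a positive semidefinite Choi operator $\Phi_{\mathcal F}\ge 0$, the plan is to find a scalar $c>0$ for which $c\,\mathcal F$ satisfies the trace non-increasing condition $\operatorname{Tr}_{\mathrm o}\Phi_{c\mathcal F}\le \mathbb I_{\mathrm i}$. Rescaling by a positive constant preserves complete positivity, because $\Phi_{c\mathcal F}=c\,\Phi_{\mathcal F}\ge 0$. So the only task is to meet the trace condition.

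First I will rewrite the trace-non-increasing property in Choi language. For any state $\rho$ we have
\begin{equation}
\operatorname{Tr}\mathcal F(\rho)=\operatorname{Tr}\!\left[\rho^{T}\operatorname{Tr}_{\mathrm o}\Phi_{\mathcal F}\right].
\end{equation}
Hence $\mathcal F$ is trace non-increasing if and only if $A:=\operatorname{Tr}_{\mathrm o}\Phi_{\mathcal F}\le \mathbb I_{\mathrm i}$. Applying this to $\rho\otimes$ ancilla states shows the condition is also the right one for complete trace non-increase, since $\mathcal F\otimes\mathrm{id}$ has partial trace $A\otimes \mathbb I$.

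Next, $A$ is positive semidefinite, because it is a partial trace of a positive operator. It is also bounded, with $\|A\|_\infty\le \operatorname{Tr}\Phi_{\mathcal F}<\infty$ in finite dimension. If $A=0$ then $\mathcal F=0$ on all positive inputs and hence is zero, so there is nothing to prove. Otherwise I set $c=1/\|A\|_\infty$, or any $0<c\le 1/\|A\|_\infty$. Then $cA\le \mathbb I_{\mathrm i}$, so $c\mathcal F$ is CPTNI.

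The only point requiring care is the infinite-dimensional setting that Lemma~\ref{lemma:Herm} allows. There I would use that a CP map in Kraus form $\mathcal F(\rho)=\sum_l M_l\rho M_l^\dagger$ has trace behaviour governed by $\sum_l M_l^\dagger M_l$. For a bounded (normal) CP map this sum converges strongly to a bounded positive operator $A'$, and the same normalization $c=1/\|A'\|$ works. I expect this boundedness assumption to be the main obstacle, and I would state it as a hypothesis. For the use in Theorem~\ref{thm:snapshotting} only the finite-dimensional case matters. There, the rescaling lets each $\mathcal I_\alpha$ be replaced by a CPTNI element, with the scale factor absorbed into $\chi_\alpha$.
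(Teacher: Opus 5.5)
Your proof is correct and follows essentially the same route as the paper: both rescale the Choi operator by the largest eigenvalue of $\operatorname{Tr}_{\mathrm o}\Phi_{\mathcal F}$ (your $\|A\|_\infty$ coincides with the paper's spectral radius $\lambda$ because $A\ge 0$), so that the rescaled marginal satisfies $\operatorname{Tr}_{\mathrm o}\Phi_{c\mathcal F}\le \mathbb I_{\mathrm i}$. Your separate handling of $A=0$ is a small improvement: the paper simply asserts $\lambda>0$, which fails exactly for the zero map, and that map is trivially CPTNI.
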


\begin{proof}
Let $\mathcal F$ be a CP map, and denote its Choi operator by $\Phi_{\mathcal F} \ge 0$.  
Define
\begin{equation}
R := \operatorname{Tr}_{\mathrm{o}} \Phi_{\mathcal F} \ge 0,
\end{equation}
and let $\lambda > 0$ be the spectral radius of $R$.  

We then define a new CP map $\mathcal F'$ whose Choi operator is
\begin{equation}
\Phi_{\mathcal F'} := \frac{\Phi_{\mathcal F}}{\lambda}.
\end{equation}
Its marginal satisfies
\begin{equation}
R' := \operatorname{Tr}_{\mathrm{o}} \Phi_{\mathcal F'} = \frac{R}{\lambda} \le \mathbb I_{\mathrm{i}}.
\end{equation}
Since $\Phi_{\mathcal F'} \ge 0$ and $R' \le \mathbb I_{\mathrm{i}}$, the map $\mathcal F'$ is CPTNI.
\end{proof}

\begin{theorem}\label{thm:CPTNI-decom}
Any linear superoperator $\mathcal{F}: \mathbf{B}(\mathcal{H}_{\mathrm{i}}) \to \mathbf{B}(\mathcal{H}_{\mathrm{o}})$ can be expressed as a complex linear combination of CPTNI maps,
\begin{equation}
    \mathcal{F}(\rho) = \sum_{\alpha} \chi_{\alpha} \mathcal{E}_{\alpha}(\rho),
    \label{eq:quantum_operation_CP_combination_F}
\end{equation}
where $\chi_{\alpha} \in \mathbb{C}$ denote the complex coefficients of the linear combination, and $\{\mathcal{E}_{\alpha}\}$ are CPTNI maps.
\end{theorem}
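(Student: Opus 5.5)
The plan is to pass through the Choi--Jamio\l kowski isomorphism, so that decomposing superoperators reduces to decomposing operators on $\mathcal H_{\mathrm{i}}\otimes\mathcal H_{\mathrm{o}}$, and then to invoke Lemma~\ref{lemma:Herm} and Lemma~\ref{lemma:CP} in turn.

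First, given $\mathcal F$, form its Choi operator $\Phi_{\mathcal F}\in\mathbf B(\mathcal H_{\mathrm{i}}\otimes\mathcal H_{\mathrm{o}})$. By Lemma~\ref{lemma:Herm}, write
\[
\Phi_{\mathcal F}=\sum_{k=1}^{4} c_k P_k ,
\]
with $P_k\ge 0$ and $c_k\in\{1,-1,i,-i\}$. Concretely, $P_1,P_2$ are the positive and negative parts of $(\Phi_{\mathcal F}+\Phi_{\mathcal F}^\dagger)/2$, and $P_3,P_4$ are those of $(\Phi_{\mathcal F}-\Phi_{\mathcal F}^\dagger)/(2i)$. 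Discard any $P_k=0$.

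Second, let $\mathcal F_k$ be the map with Choi operator $P_k$, i.e.\ $\mathcal F_k(\rho)=\operatorname{Tr}_{\mathrm{i}}[(\rho^T\otimes\mathbb I_{\mathrm{o}})P_k]$. By Choi's theorem each $\mathcal F_k$ is completely positive. Apply Lemma~\ref{lemma:CP}: with $\lambda_k>0$ the spectral radius of $\operatorname{Tr}_{\mathrm{o}}P_k$, the map $\mathcal E_k:=\mathcal F_k/\lambda_k$ is CPTNI.

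We need $\lambda_k>0$ whenever $P_k\neq0$. This holds because $P_k\ge0$ and $P_k\neq 0$ give $\operatorname{Tr}(\operatorname{Tr}_{\mathrm{o}}P_k)=\operatorname{Tr}P_k>0$.

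Third, linearity of the CJ correspondence gives
\[
\mathcal F=\sum_k c_k\lambda_k\,\mathcal E_k .
\]
This is Eq.~\eqref{eq:quantum_operation_CP_combination_F} with $\chi_k=c_k\lambda_k$, using at most four CPTNI maps. Since the Choi map is a bijection, agreement of Choi operators implies agreement of maps.

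I expect no real obstacle. The only points needing care are the $\lambda_k>0$ check and, in infinite dimensions, the meaning of the Choi operator and partial trace. For the latter I would restrict the argument to the finite-dimensional setting used throughout the paper. Alternatively, one could argue directly from the explicit IC-POVM construction in the proof of Theorem~\ref{thm:snapshotting}, but the route above is self-contained given the two lemmas.
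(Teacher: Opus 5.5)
Your proposal is correct and takes the same route as the paper, whose proof simply combines Lemma~\ref{lemma:Herm} (applied to the Choi operator $\Phi_{\mathcal F}$) with Lemma~\ref{lemma:CP}; you have spelled out the Choi--Jamio\l kowski step, the four-term positive decomposition, and the final rescaling. Your check that $\lambda_k>0$ whenever $P_k\neq 0$ (discarding zero terms otherwise) fills a small point that the paper's Lemma~\ref{lemma:CP} takes for granted.
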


\begin{proof}
By combining Lemmas~\ref{lemma:Herm} and \ref{lemma:CP}, we obtain the desired conclusion.
\end{proof}

Notice that Theorem~\ref{thm:CPTNI-decom} is weaker than Theorem~\ref{thm:snapshotting}. 
Theorem~\ref{thm:snapshotting} can be proven using the following well-known result in quantum information theory:

\begin{lemma}
For any Hilbert space $\mathcal{H}$ with dimension $d$, there exists an informationally complete POVM (IC-POVM). That is, there exists a set of positive operators $\{F_k\}$ which forms a basis for the space of Hermitian operators $\Herm(\mathcal{H})$ as a real vector space.
\end{lemma}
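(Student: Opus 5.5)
The existence claim is constructive, so the plan is to exhibit an explicit family of $d^2$ positive operators on $\mathcal{H}$, $d=\dim\mathcal{H}$, and show it is a real basis of $\Herm(\mathcal{H})$. Since $\Herm(\mathcal{H})$ has real dimension $d^2$, it suffices to produce $d^2$ positive semidefinite operators that span $\Herm(\mathcal{H})$; linear independence then follows automatically.

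Following the construction already used in the main text, fix an orthonormal basis $\{|k\rangle\}_{k=1}^d$ and define the following rank-one projectors:
\begin{itemize}
\item for each $k$, $P_{kk}=|k\rangle\langle k|$;
\item for each $j<k$, $P^{(+)}_{jk}=\tfrac12(|j\rangle+|k\rangle)(\langle j|+\langle k|)$;
\item for each $j<k$, $P^{(-)}_{jk}=\tfrac12(|j\rangle+i|k\rangle)(\langle j|-i\langle k|)$.
\end{itemize}
This gives $d+2\binom{d}{2}=d^2$ positive operators. Expanding the products gives
\begin{align}
2P^{(+)}_{jk}-P_{jj}-P_{kk} &= |j\rangle\langle k|+|k\rangle\langle j|,\\
2P^{(-)}_{jk}-P_{jj}-P_{kk} &= i\bigl(|k\rangle\langle j|-|j\rangle\langle k|\bigr).
\end{align}
The family $\{P_{kk}\}$, together with these symmetric and antisymmetric combinations over $j<k$, is the standard (generalized Gell-Mann) real basis of $\Herm(\mathcal{H})$. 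Hence the $P$'s span $\Herm(\mathcal{H})$ over $\mathbb{R}$, and by the dimension count they form a basis.

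The main step needing care is turning this basis into a POVM, i.e.\ making the elements sum to $\mathbb{I}$. Let $S=\sum_\alpha P_\alpha$. I would argue $S>0$: it is a sum of positive operators, and already $\sum_k P_{kk}=\mathbb{I}$, so $S\ge \mathbb{I}$. This makes $S^{-1/2}$ well defined, and we set
\begin{equation}
F_\alpha = S^{-1/2}P_\alpha S^{-1/2}.
\end{equation}
Each $F_\alpha$ is positive because congruence preserves positivity, and $\sum_\alpha F_\alpha=\mathbb{I}$ by construction. The map $X\mapsto S^{-1/2}XS^{-1/2}$ is an invertible real-linear map of $\Herm(\mathcal{H})$ onto itself, so it carries a basis to a basis. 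Therefore $\{F_\alpha\}$ is an IC-POVM.

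Finally, because every element of $\mathbf{B}(\mathcal{H})$ splits as $H_1+iH_2$ with $H_1,H_2$ Hermitian (as in Lemma~\ref{lemma:Herm}), the $F_\alpha$ also form a complex basis of $\mathbf{B}(\mathcal{H})$. This is the form in which the result is used in Theorem~\ref{thm:snapshotting}.
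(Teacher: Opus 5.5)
Your proof is correct and follows the same route as the paper: take a family of positive operators spanning $\Herm(\mathcal{H})$ and normalize it by the congruence $X\mapsto S^{-1/2}XS^{-1/2}$, which preserves positivity, makes the elements sum to $\mathbb{I}$, and maps $\Herm(\mathcal{H})$ bijectively onto itself. You also make explicit three points the paper's proof leaves implicit: an actual positive spanning family, the invertibility of $S$ (via $S\ge\sum_k P_{kk}=\mathbb{I}$), and the dimension count that upgrades spanning to a basis.
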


\begin{proof}
Let $\{A_i\}$ be a set of positive semidefinite operators that span $\Herm(\mathcal{H})$, and define $A = \sum_i A_i$. Then
\begin{equation}
    F_i = A^{-1/2} A_i A^{-1/2}
\end{equation}
defines a POVM, since $\sum_i F_i = \mathbb{I}$.
To see that $\{F_i\}$ is informationally complete, note that $A = \sum_i A_i$ is full rank. Hence the map $X \mapsto A^{-1/2} X A^{-1/2}$ is an isomorphism on $\Herm(\mathcal{H})$, and therefore $\{F_i\}$ also spans $\Herm(\mathcal{H})$.
\end{proof}

\begin{corollary}
For any Hilbert space $\mathcal{H}$ with dimension $d$, there exists an IC-POVM which forms a basis for the space of linear operators $\mathbf{B}(\mathcal{H})$ as a complex vector space.
\end{corollary}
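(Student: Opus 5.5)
The corollary follows directly from the preceding lemma by complexification. Let $\{F_k\}$ be the IC-POVM supplied by the lemma. It consists of positive operators with $\sum_k F_k=\mathbb{I}$, and it is a basis of $\Herm(\mathcal{H})$ as a real vector space. Since $\dim_{\mathbb R}\Herm(\mathcal{H})=d^2$, there are exactly $d^2$ elements $F_k$. I will show that the same family is a basis of $\mathbf{B}(\mathcal{H})$ over $\mathbb{C}$, which has complex dimension $d^2$.

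\emph{Spanning.} Take any $X\in\mathbf{B}(\mathcal{H})$ and use the Cartesian decomposition from Lemma~\ref{lemma:Herm}: $X=H_1+iH_2$, where $H_1=(X+X^\dagger)/2$ and $H_2=(X-X^\dagger)/(2i)$ are Hermitian. Expand each part in the real basis as $H_j=\sum_k r^{(j)}_k F_k$ with $r^{(j)}_k\in\mathbb R$. Then
\begin{equation}
X=\sum_k \bigl(r^{(1)}_k+i r^{(2)}_k\bigr)F_k ,
\end{equation}
so $\{F_k\}$ spans $\mathbf{B}(\mathcal{H})$ over $\mathbb{C}$.

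\emph{Linear independence.} Suppose $\sum_k c_k F_k=0$ with $c_k=a_k+ib_k$ and $a_k,b_k\in\mathbb{R}$. Taking the adjoint and using $F_k^\dagger=F_k$ gives $\sum_k \bar c_k F_k=0$. Adding and subtracting the two relations yields $\sum_k a_kF_k=0$ and $\sum_k b_kF_k=0$, both of which are real relations in $\Herm(\mathcal{H})$. Real linear independence then forces $a_k=b_k=0$, so $c_k=0$ for all $k$.

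Alternatively, once spanning is established, a dimension count (exactly $d^2$ spanning elements in a $d^2$-dimensional complex space) gives independence immediately. The only point needing care is that the real basis of $\Herm(\mathcal{H})$ has cardinality $d^2$, which the lemma's construction guarantees if one starts from a real basis $\{A_i\}$ of positive operators, as in the explicit $\Pi_\alpha$ construction in the main text. This is the step where I would be careful: if the lemma had produced an overcomplete spanning set, the corollary would still hold after selecting a subset that is a real basis (each selected $F_k$ remains positive), but then the selected operators no longer sum to $\mathbb{I}$. In that case I would rerun the lemma's normalization $A^{-1/2}A_iA^{-1/2}$ on the selected subset, which restores the POVM property and preserves the basis property because the map $X\mapsto A^{-1/2}XA^{-1/2}$ is an isomorphism.
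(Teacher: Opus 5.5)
Your proposal is correct and follows the same complexification route as the paper. The paper's entire proof is the remark that $\Herm(\mathcal{H})$ spans $\mathbf{B}(\mathcal{H})$ over $\mathbb{C}$, i.e.\ your Cartesian-decomposition step, and your adjoint argument for complex linear independence plus your caution that the lemma's proof only yields a real spanning set (so one may need to extract a basis and renormalize, which works because a sum of positive operators spanning $\Herm(\mathcal{H})$ is invertible) fill in details the paper leaves implicit.
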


\begin{proof}
This follows from the fact that $\Herm(\mathcal{H})$ spans $\mathbf{B}(\mathcal{H})$ over $\mathbb{C}$.
\end{proof}

 \textbf{Proof of Theorem~\ref{thm:snapshotting}}:
For $\mathcal{H}_{\mathrm{i}} \otimes \mathcal{H}_{\mathrm{o}}$, choose an IC-POVM $\{F_{\alpha}\}$. Define
\begin{equation}
    \mathcal{I}_{\alpha}(\rho) = 
    \operatorname{Tr}_{\mathrm{i}}
    \!\left[
    (\rho^{T} \otimes \mathbb{I}_{\mathrm{o}})
    F_{\alpha}
    \right],
\end{equation}
which is completely positive since $F_{\alpha}$ is positive.

It is clear that $\sum_{\alpha} \mathcal{I}_{\alpha}$ has Choi matrix
\begin{equation}
    \Phi_{\mathcal{I}} = \sum_{\alpha} F_{\alpha} = \mathbb{I}_{\mathrm{i}} \otimes \mathbb{I}_{\mathrm{o}}.
\end{equation}
We have $\operatorname{Tr}_{\mathrm{o}} \Phi_{\mathcal{I}} = \mathbb{I}_{\mathrm{i}}$, hence $\{\mathcal{I}_{\alpha}\}$ forms a quantum instrument.
Since $\{F_{\alpha}\}$ span $\mathbf{B}(\mathcal{H}_{\mathrm{i}} \otimes \mathcal{H}_{\mathrm{o}})$, $\{\mathcal{I}_{\alpha}\}$ span the space of linear superoperator from $\mathbf{B}(\mathcal{H}_{\mathrm{i}} )$ to $\mathbf{B}( \mathcal{H}_{\mathrm{o}})$.

\section{Proof of sample complexity bound}

The sample complexity bound for temporal state tomography follows largely the same line of reasoning as its spatial counterpart. The key obstacle lies in the fact that a temporal state is not a positive semidefinite operator, a difficulty we resolve using the quantum snapshotting method of TQD.
Our proof relies on the Hoeffding inequality, stated below.

\begin{lemma}[Hoeffding inequality]
Let $X_1, \dots, X_n$ be independent random variables such that $X_i \in [a_i, b_i]$ almost surely, where $a_i \leq b_i$. Let $\mathbb{E}[X_i] = \mu_i$ for $1 \leq i \leq n$, and define $X = \sum_{i=1}^n X_i$ and $\mu = \mathbb{E}[X] = \sum_{i=1}^n \mu_i$. Then, for all $t \geq 0$,
\[
\mathbb{P}\big(|X - \mu| \geq t\big)
\leq 2 \exp\!\left(-\frac{2t^2}{\sum_{i=1}^n (b_i - a_i)^2}\right).
\]
\end{lemma}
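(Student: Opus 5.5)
The statement immediately above is the Hoeffding inequality, a standard concentration result. I would prove it by the classical Chernoff route. Write $Y_i = X_i - \mu_i$, so that $\mathbb{E}[Y_i]=0$ and $Y_i\in[a_i-\mu_i,\,b_i-\mu_i]$, an interval of length $b_i-a_i$. For any $s>0$, apply Markov's inequality to $e^{s(X-\mu)}$ and use independence to factorize the moment generating function:
\begin{equation}
\mathbb{P}(X-\mu\ge t)\le e^{-st}\,\mathbb{E}\bigl[e^{s(X-\mu)}\bigr]=e^{-st}\prod_{i=1}^n \mathbb{E}\bigl[e^{sY_i}\bigr].
\end{equation}

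The key step, and the only real obstacle, is Hoeffding's lemma: for a mean-zero random variable $Y\in[a,b]$,
\begin{equation}
\mathbb{E}[e^{sY}]\le \exp\!\left(\frac{s^2(b-a)^2}{8}\right).
\end{equation}
I would prove it as follows.
\begin{itemize}
\item By convexity of $y\mapsto e^{sy}$ on $[a,b]$, we have $e^{sy}\le \frac{b-y}{b-a}e^{sa}+\frac{y-a}{b-a}e^{sb}$.
\item Taking expectations with $\mathbb{E}Y=0$ gives $\mathbb{E}e^{sY}\le e^{L(h)}$. Here $h=s(b-a)$, $p=-a/(b-a)\in[0,1]$, and $L(h)=-hp+\log(1-p+pe^h)$.
\item A direct computation shows $L(0)=L'(0)=0$. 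Moreover, $L''(h)=q(1-q)\le 1/4$, where $q=pe^h/(1-p+pe^h)\in[0,1]$.
\item Taylor's theorem with remainder then yields $L(h)\le h^2/8$.
\end{itemize}
Some care is needed to confirm that $p\in[0,1]$, which follows from $a\le 0\le b$ (forced by the mean being zero), and to handle the degenerate case $a=b$.

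Inserting Hoeffding's lemma into the Chernoff bound gives
\begin{equation}
\mathbb{P}(X-\mu\ge t)\le \exp\!\left(-st+\frac{s^2}{8}\sum_i (b_i-a_i)^2\right).
\end{equation}
Optimizing over $s$ gives $s=4t/\sum_i(b_i-a_i)^2$, and the bound becomes $\exp\!\bigl(-2t^2/\sum_i(b_i-a_i)^2\bigr)$. Applying the same argument to $-X_i$, which lie in $[-b_i,-a_i]$, bounds the lower tail identically. A union bound over the two tails then produces the factor $2$ in the stated inequality.
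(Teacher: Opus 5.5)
Your proof is correct. It is the classical Chernoff-bound argument combined with Hoeffding's lemma, and each step checks out:
\begin{itemize}
\item the convexity bound and the identification $L(h)=-hp+\log(1-p+pe^h)$;
\item the computation $L''=q(1-q)\le 1/4$;
\item the optimizing choice $s=4t/\sum_i(b_i-a_i)^2$;
\item the two-tail union bound.
\end{itemize}
The paper gives no proof of this lemma. It quotes Hoeffding's inequality as a standard tool and applies it directly in the sample-complexity argument, so your derivation supplies the textbook proof the paper takes for granted.

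One small point to add: the choice $s>0$ requires $t>0$ and $\sum_i(b_i-a_i)^2>0$. The case $t=0$ is trivial, since the right-hand side is then $2$.
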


\textbf{Proof of sample complexity bound in Theorem 2:}
Consider a time-ordered quantum instrument $\{\mathcal{I}_{\alpha_0},\dots,\mathcal{I}_{\alpha_n}\}$ and denote $\vec{\alpha}=(\alpha_0,\dots,\alpha_n)$. Let $p(\vec{\alpha})$ be the induced trajectory probability distribution.

Define the random variables (we use $\mathbf{1}$ to denote indicator function)
\begin{equation}
X_i^{\vec{\alpha}} := \mathbf{1}[\text{trajectory } \vec{\alpha} \text{ occurs in run } i],
\end{equation}
which are  independent and identically distributed (i.i.d.) and satisfy
\begin{equation}
\mathbb{E}[X_i^{\vec{\alpha}}] = p(\vec{\alpha}), \quad X_i^{\vec{\alpha}} \in [0,1].
\end{equation}
The empirical frequency is
\begin{equation}
\hat{p}(\vec{\alpha}) = \frac{1}{N}\sum_{i=1}^N X_i^{\vec{\alpha}}.
\end{equation}
By Hoeffding's inequality, for any fixed $\vec{\alpha}$,
\begin{equation}
\mathbb{P}\!\left(
\left| \hat{p}(\vec{\alpha}) - p(\vec{\alpha}) \right| \ge t
\right)
\le 2 \exp(-2 N t^2).
\end{equation}
Let $M := |\{\vec{\alpha}\}|$ be the number of trajectories. For informationally complete instruments, one has
\begin{equation}
M = \prod_{i=0}^n m_i .
\end{equation}
Applying a union bound over all $\vec{\alpha}$ yields
\begin{equation}
\mathbb{P}\!\left(
\max_{\vec{\alpha}} |\hat{p}(\vec{\alpha}) - p(\vec{\alpha})| \ge t
\right)
\le 2M \exp(-2Nt^2).
\end{equation}
Setting the right-hand side to be at most $\delta$, we obtain
\begin{equation}
t = \sqrt{\frac{1}{2N}\log\frac{2M}{\delta}}.
\end{equation}
Hence, with probability at least $1-\delta$,
\begin{equation}
\|\hat{p}-p\|_\infty \le \sqrt{\frac{1}{2N}\log\frac{2M}{\delta}}.
\end{equation}
Consequently,
\begin{equation}
\|\hat{p}-p\|_2
\le \sqrt{M}\,\|\hat{p}-p\|_\infty
\le \sqrt{\frac{M}{2N}\log\frac{2M}{\delta}}.
\end{equation}
Next, we use the linear reconstruction of the temporal state. By assumption, $\Upsilon$ admits the expansion
\begin{equation}
\Upsilon = \sum_{\vec{\beta}} G_{\vec{\beta}} Q(\vec{\beta}), \quad
Q(\vec{\beta}) = \sum_{\vec{\alpha}} M_{\vec{\beta},\vec{\alpha}} p(\vec{\alpha}).
\end{equation}
Define
\begin{equation}
T_{\vec{\alpha}} := \sum_{\vec{\beta}} G_{\vec{\beta}} M_{\vec{\beta},\vec{\alpha}},
\end{equation}
so that
\begin{equation}
\Upsilon = \sum_{\vec{\alpha}} T_{\vec{\alpha}} p(\vec{\alpha}), \quad
\hat{\Upsilon} = \sum_{\vec{\alpha}} T_{\vec{\alpha}} \hat{p}(\vec{\alpha}).
\end{equation}
Thus,
\begin{equation}
\hat{\Upsilon}-\Upsilon
= \sum_{\vec{\alpha}} T_{\vec{\alpha}} (\hat{p}-p)(\vec{\alpha}).
\end{equation}
Introduce the linear map
\begin{equation}
T: \mathbb{R}^M \to \mathbf{B}(\mathcal{H}_{t_n}\otimes \cdots \otimes \mathcal{H}_{t_0}), \quad
T(x) = \sum_{\vec{\alpha}} T_{\vec{\alpha}} x_{\vec{\alpha}}.
\end{equation}
Then
\begin{equation}
\|\hat{\Upsilon}-\Upsilon\|_2
\le \|T\|_2 \cdot \|\hat{p}-p\|_2.
\end{equation}
For a well-conditioned informationally complete construction, $\|T\|_2 = O(1)$, and therefore
\begin{equation}
\|\hat{\Upsilon}-\Upsilon\|_2
= O\!\left(\sqrt{\frac{M}{N}\log\frac{M}{\delta}}\right).
\end{equation}
Setting the right-hand side to be at most $\varepsilon$ gives
\begin{equation}
N = O\!\left(\frac{M}{\varepsilon^2}\log\frac{M}{\delta}\right).
\end{equation}
When $M =  \Theta\!\left(\prod_{i=0}^n d_{t_i}^2\right) = \Theta(d^2)$, we obtain
\begin{equation}
\|\hat{\Upsilon}-\Upsilon\|_2
= O\!\left(\sqrt{\frac{d^2}{N}\log\frac{d}{\delta}}\right).
\end{equation}
Setting the right-hand side to be at most $\varepsilon$ gives
\begin{equation}
N = O\!\left(\frac{d^2}{\varepsilon^2}\log\frac{d^2}{\delta}\right).
\end{equation}
We thus complete the proof.

\end{document}